\providecommand{\PSforPDF}[1]{#1}
\newtheorem{lemma}{Lemma}
\newtheorem{definition}{Definition}
\newtheorem{proposition}{Proposition}
\begin{document}

\title{Energy Efficiency of Massive MIMO: Coping with Daily Load Variation }
%


\author{
\IEEEauthorblockN { M M Aftab Hossain \IEEEauthorrefmark {1},  Cicek Cavdar \IEEEauthorrefmark {2},  Emil Bj\"ornson \IEEEauthorrefmark {3}, and Riku J\"antti }\IEEEauthorrefmark {1}}
\maketitle

\begin{abstract} 
Massive MIMO is a promising technique to meet the exponential growth of mobile traffic demand. However, contrary to the current systems, energy consumption of next generation networks is required to be load adaptive as the network load varies significantly throughout the day. In this paper, we propose a load adaptive multi-cell massive MIMO system where each base station (BS) adapts the number of antennas to the daily load profile (DLP) in order to maximize the downlink energy efficiency (EE).  In order to incorporate the DLP, the load at each BS is modeled as an $M/G/m/m$ state dependent queue under the assumption that the network is dimensioned to serve a maximum number of users at the peak load. The EE maximization problem is formulated in a game theoretic framework where the number of antennas to be used by a BS is determined through best response iteration. This load adaptive system achieves around 24\% higher EE and saves around 40\% energy compared to a baseline system where the BSs always run with the fixed number of antennas that is most energy efficient at the peak load and that can be switched off when there is no traffic.                                   
\end{abstract}

\begin{IEEEkeywords}
 Massive MIMO, Energy efficiency, M/G/m/m Queue, Game theory 
\end{IEEEkeywords}

\section{Introduction}
The goal of next generation cellular networks (NGNs) is to provide thousand fold area capacity 
 \thanks{\\
\IEEEauthorblockA{\IEEEauthorrefmark{1}School of Electrical Engineering, Aalto University, Finland,
Email:\{mm.hossain, riku.jantti\}@aalto.fi} \\
\IEEEauthorblockA{\IEEEauthorrefmark{2}Wireless@KTH, KTH Royal Institute of Technology, Sweden,
 Email: \ { cavdar@kth.se \\}}
\IEEEauthorblockA{\IEEEauthorrefmark{3} Dept. of Electrical Eng. (ISY), Link\"oping University, Sweden,
Email: \ { emil.bjornson@liu.se}} \\}
improvements while keeping the same cost and power consumption as today~\cite{Ericsson}. In order to 
design such highly energy efficient networks, it is important to consider the fact that temporal variation of traffic demand over the day is  significant; recent data in \cite{EarthModel} shows that the daily maximum loads are even $2-10$ times higher than the daily minimum loads. Contemporary cellular networks are designed  focusing on the peak traffic demand and hence the BS with no load  consumes at least half of the energy it requires to serve the peak load~\cite{EarthModel}. The NGNs are required to  adapt their power consumption to the temporal load variation. Massive MIMO systems are expected to play a pivotal role in delivering very high capacity in NGNs. In massive MIMO systems, each BS uses hundreds of antennas to be able to simultaneously serve tens of user equipments (UEs) on the same time-frequency resource, by downlink  multi-user precoding and uplink multi-user detection\cite{Larsson}.  This study aims to give an insight to the energy efficient design of multi-cell massive MIMO system taking into account the dynamic efficiency of a power amplifier (PA)  and adaptive activation of antennas following the DLP.

 Recently, both  massive MIMO and EE optimization of wireless systems have gained significant traction from both academia and industry \cite{Da, Hien,Yang,  Emilc, Emilj, Ngoc,  Magnus, Persson, Cheng}. In~\cite{Hien}, it has been shown that the transmit power can be reduced proportionally with the number of antennas if the BS has perfect channel state information (CSI) without any fundamental loss in performance. In \cite{Yang} and \cite{Emilc}, the impact of the circuit power on the EE of massive MIMO has been analyzed. Specifically, in \cite{Emilc}, it  has been shown that without accounting for circuit power consumption, an infinite EE can be achieved as the number of antennas, $M\to\infty$, which is misleading. In \cite{Emilc} and \cite{Emilj}, the authors show that the EE optimal strategy requires  increasing the transmission power with the number of antennas if the circuit power consumption is taken into account. They also show that the EE is a quasi-concave function of  the three main design parameters; namely, number of BS antennas, number of users and transmit power.  In \cite{Ngoc}, an adaptive antenna selection scheme was proposed where both the number of active RF (radio frequency) chains and the antenna indices are selected depending on the channel conditions.  In \cite{Persson}, it is shown that under  simultaneous per-antenna radiated power constraints and total consumed power constraints, the capacity achieving power allocation scheme feeds as many antennas as possible with maximum power. In \cite{Cheng}, the authors show that during low traffic demand, e.g., late night,  it is optimal to turn off a fraction of the antennas while minimizing total power consumption if the power consumption in the transceiver circuits is taken into account along with the RF amplifiers.  However, none of these studies has provided any mechanism to cope with the daily load variation and maintain high EE throughout the day in a multi-cell scenario. In  \cite{WS} and \cite{WS14}, we show that there is  potential to improve EE by adapting the number of antennas to the daily load variation.  Note that in \cite{WS14}, the number of active antennas that maximize EE for any given load has been derived considering fixed inter-cell interference i.e. the BS transmit power is assumed to be fixed for any number of active antennas and users.

In this extended work, we aim at designing a multi-cell massive MIMO system where each cell adapts the number of active antennas to the temporal variation of load following a DLP in order to maximize average EE throughout the day. In particular, we divide a day in an arbitrary number of intervals and for each interval we find i) the user distribution using the corresponding average load from the DLP and ii) the optimum number of antennas that maximizes EE under the given network load and interference condition for any number of users a cell serves.  We assume fixed average transmit power per antenna and also consider the realistic efficiency characteristics of non-ideal power amplifier (PA) when maximizing the EE.   Note that EE is measured in bit/Joule, i.e., the ratio between the average achievable data rate and the total average power consumption~\cite{Emilc, Chen}.  In order to map the user distribution to the DLP, i.e., to calculate the probability of having a certain number of users at a specific time of the day, we model the load at each BS as a state dependent $M/G/m/m$ queue~\cite{Cruz, Cheah} and  utilize the DLP as suggested in~\cite{EarthModel,CLRL}.

In order to compare the performance of this load adaptive system, we  dimension a reference system where the BSs do not adapt the number of antennas to the load variation, i.e., each BS uses the maximum number of antennas, $M_{\max}$ as long there is users to serve and achieves the highest EE when  serving the maximum number of users, $K_{\max}$ with $2\%$ blocking probability. We  find this $K_{\max}$,  $M_{\max}$ and the optimum average power per antenna, $p$ that maximize EE  and consider the amount of data carried by the cell at this state to be the peak load corresponding to the DLP. In order to  adapt the number of antennas to the DLP, it is important to note that the number of antennas that maximizes the EE of a cell depends on the number of antennas used by the other neighboring cells due to inter-cell interference. As the joint optimization problem is not convex with the number of antennas,  we propose a distributed algorithm that allows each cell to determine the number of active antennas through a best response iteration.  We resort to a game theoretic approach to show the convergence of the proposed algorithm to a unique Nash equilibrium.

 A traditional PA (TPA) reaches its maximum efficiency $\eta$ while operating at  its maximum output power, $P_{\max,\text{PA}}$. However, due to the non-constant envelope signals, e.g., OFDM, CDMA,  the PAs rarely operate at their maximum output power. Usually the peak to average power ratio (PAPR) of these signals is around $8$  dB~\cite{wegener}.  In this work we consider  TPA along with  a more efficient PA,  envelope tracking PA (ET-PA) in order to  investigate the impact of PA efficiency on our EE optimization scheme~\cite{Hossain1,Hossain}.

In  the numerical section, we illustrate the potential to increase EE by adapting the number of antennas in a multi-cell massive MIMO system.  We observe that the gain in EE can be as high as 250\% at very low load when compared with a baseline system that does not adapt the number of antennas to the DLP.  However, this gain  keeps decreasing with the increase in load. We see that over $24$ hour operation, the average gain is around 24\%.  In terms of actual power consumption, our scheme saves around 40\% energy  compared to the reference system over $24$ hour operation.
 
The rest of the paper is organized as follows: 
In   Section \ref{sec:ProblemFormulation}, we formulate the EE maximization problem and in  Section~\ref{sec:SystemModel}, we present required models and assumptions. In  Section \ref{sec:EEMG}, we formulate the EE maximization game and  in Section \ref{sec:OptimizationAlgorithm} we present the optimization algorithm. In Section \ref{sec:NumericalResults}, we illustrate the findings of the numerical analysis. We conclude the paper in Section \ref{sec:Conclusion}.

\section{Problem formulation}
\label{sec:ProblemFormulation}

In this study, we aim at designing a multi-cell massive MIMO system that adapts the number of active antennas to the network load and interference condition in order to maximize average EE.  The EE is the benefit-cost ratio of a network and is defined as the number of bits transferred per Joule of energy and hence can be computed as the ratio of average sum rate (in bit/second) and the average total power consumption (in Joule/second)~\cite{Emilj}. Power consumption of a BS consists of a fixed part (e.g., control signal, backhaul, DC-DC conversion) and a dynamic part, i.e., radiated transmit power and circuit power which depends on the number of active antennas and number of users served simultaneously.  If  $P_c^{tot}( K_c, M_c)$ denotes the total power consumed by an arbitrary BS $c$ when serving $K_c$ number of users simultaneously using $M_c$ antennas and  $R_c(K_c, M_c,  \{M_d\}_{d \neq c})$ denotes the resulting average data rate per user,  the corresponding EE will be
\begin{eqnarray}
\label{EE_c}
\mathrm{EE} &=&  \frac{{\text{Average sum rate}}}{\text{Power consumption}} = \frac {K_c R_c(K_c, M_c,  \{M_d\}_{d \neq c})} {P_c^{tot}(K_c, M_c)} 
\end{eqnarray} 
where $\{M_d\}_{d \neq c}$ represents the number of the antennas used by any other cell $d$, $d \ne c$.
The EE maximization problem for cell $c$ for a particular load can be expressed in the following way:
\begin{subequations} 
\label{eq:pproblem1}
\renewcommand\theequation{\theparentequation\roman{equation}}
\begin{align} 
   {\mathop {{\text{maximize}}}_{M_c} } &   
    \frac {K_c R_c(K_c,  M_c, \{M_d\}_{d \neq c}) } {P_c^{tot}(K_c, M_c)}\label{eq:P11}\\ 
 {{\text{subject to:}}} &\;\; { M_c \in \mathcal{S}_c}  \label{eq:P12}  
\end{align}
\end{subequations}
where $\mathcal{S}_c =\{0, 1, 2, . . ., M_{\max}\}$ is the set of feasible number of antennas the cell $c$ can use. 
However, the network loads vary throughout the day. In order to capture the daily load variation and maximize EE throughout the day, we model the load at each BS as an $M/G/m/m$ state-dependent queue.  Let us consider that  during the time interval $h$, the steady state probability of the BS $c$ serving $n$ number of users, i.e., $Pr[K_c=n]$  is denoted by $\pi_c(h,n)$. Our objective is to maximize the average EE by adapting the number of active antennas to the number of active users taking the received interference into account. The main problem formulation for BS $c$ can be rewritten as
\begin{subequations} 
\label{eq:problem}
\renewcommand\theequation{\theparentequation\roman{equation}}
\begin{align} 
   {\mathop {{\text{maximize}}}_{\mathbb{M}_c}} &\;\;
   {\sum_{h=1}^{H}}
     \sum_{n=1}^m \!\!\pi_c(h, n) \frac {n R_c(n, M_c^{(h)}, \{M_d^{(h)}\}_{d \neq c}) } {P_c^{tot}(n, M_c^{(h)})}  \label{eq:Q11}\\ 
   {{\text{subject to:}}} &\;\; { \mathcal{M}_c^{(h)} \in \mathcal{S}_c} \label{eq:Q12}  
\end{align}
\end{subequations}
where $R_c(n, M_c^{(h)}, \{M_d^{(h)}\}_{d \neq c}) $ is the average rate per user when there are $n$ users in the cell during time interval $h$. ${\mathbb{M}_c} = [\mathcal{M}_c^{(1)} \mathcal{M}_c^{(2)} . . .  \mathcal{M}_c^{(H)} ]$ where $\mathcal{M}_c^{(h)}$ is the vector that gives the number of antennas that maximizes the EE at different user states in cell $c$ during the time interval $h$. It is important to note that the optimum number of antennas is not only dependent on the number of users but also the time of the day which determines the probability of having that certain number of users. Note that EE optimization could be divided into an arbitrary number of time intervals, $H$, during a daily traffic profile. 
In this work, we use  hourly average load (i.e., $H=24$) from the DLP proposed in \cite{EarthModel} and  $12$ minute average load (i.e., $H=120$) from the DLP proposed in  \cite{CLRL} as input for the simulations in order to optimize the system over $24$ hour operation.  As the solution for  the problem \eqref{eq:problem}  depends on the actions taken at other BSs, we formulate the joint optimization  under a game theory framework  where each BS decides the  number of active antennas in a distributed manner, iteratively converging to a Nash equilibrium. Note that the formulated problem is not dependent on this particular queuing model.  The queuing model serves for the mapping of a daily traffic profile in terms of arrival rates to the expected values of number of users at a certain time interval of the day.

\section{Models and Assumptions}
\label{sec:SystemModel}

In order to study the energy efficient traffic adaptive massive MIMO system,  let us consider the following  models for transmission rate, traffic
and BS power consumption.

\subsection{System model}
Let us consider the  downlink of a multi-cell massive MIMO system consisting of cells with indices in the set $\mathcal{C} = \{1, 2, ..., C\}$ and each having its own BS. In the following the terms cell and BS are used interchangeably. The BS $c \in \mathcal{C}$  uses $M_c$ antennas to  serve $K_c$  single antenna UEs. The set of user states for a cell $c$ is $\mathcal{U}_c = \{1, 2,  ..., K_{\max}\}$,  with the elements representing the number of users served in a cell. Each antenna of the BS has its own power amplifier. We consider Rayleigh fading channels to the UEs and the spacing between adjacent antennas at the BS is such that the   channel components  between the BS antennas and the single-antenna UEs are uncorrelated. Under the assumption of this independent fading and considering the fact that power gets averaged over many subcarriers,  each antenna uses the same average power. Let us denote  this average transmit power per antenna by $p$, hence, the total transmit power of cell $c$ is  $p M_c$. The number of active antennas of any other cell $d \ne c$ is $M_d$ and the corresponding transmit power is $p M_d$. Note that the average transmit power of a BS is not fixed as it varies with the number of active antennas.\footnote{ We assume the transmit power is limited by the power amplifiers and not by any regulatory aspects.}  The large-scale fading, i.e., the average channel attenuation due to path-loss, scattering, and shadowing from the BS to the UEs is assumed to be the same for all the antennas, as the distance between any UE and the BS is  much larger than the distance among the antennas. 

Let us assume that the BSs and UEs employ a time-division duplex (TDD) protocol and are perfectly synchronized. We also assume that the BS obtains perfect CSI from the uplink pilots, which is a reasonable assumption for low-mobility scenarios. Each BS employs zero forcing precoding so that the intracell interference is canceled out and  equal power allocation is applied to utilize the pathloss differences to achieve high average data rates. Let us denote the average data  rate of the users  in cell $c$ by $R_c$. Note that $R_c$ is a function of $K_c$, $M_c$ and $M_d$, $d \in \mathcal{C}$, $d \ne c$. The average data rate is limited by the capacity, which is generally unknown in multi-cell scenarios, but a tractable and achievable lower bound in cell $c$ under the above assumptions is  given by \cite{ Emilmo}  
\begin{equation}
R_c\!=\! B\Big(1\!-\!\frac{ \alpha K_{\max}}{T_c}\Big) \log_{2}\! \Bigg(\!1\!+\!\frac{p \frac{M_c}{K_c}  (M_c-K_c)}{\sigma^2 G_{cc} + \sum_{d \neq c}  p M_d G_{cd}} \!\Bigg)
\label{eq:Rc}
\end{equation} 
where $\alpha$ is the pilot reuse factor,  $K_{\max}$ is the maximum number of users  assumed to be the same for all cells (i.e., the number of pilot sequences = $\alpha K_{\max}$), $T_c$ is the length of the channel coherence interval (in symbols), $\Big(1-\frac{\alpha K_{\max}}{T_c}\Big)$ accounts for the necessary overhead for channel estimation,  $B$ is the effective channel bandwidth,  $G_{cc} = \mathbb{E}\{\frac{1}{g_{cck}}\}$ and $G_{cd}$=$\mathbb{E}\{\frac{g_{dck}}{g_{cck}}\}$, where  the random variable  $g_{cck}$ is the channel variance from the serving BS and $g_{dck}$ is the  channel variance from cell $d$ to users in cell $c$, i.e.,  $ \sum_{d \neq c} G_{cd} p M_d$ is the average inter-cell interference power from cell $d$ to $c$  normalized by $G_{cc}$. Note that $R_c$ is achieved by averaging over the locations of the users of cell $c$, thus it can either be viewed as an average rate among the users in the cell or a rate that every user will achieve when it is moving around in the cell. A proof of the rate formula in \eqref{eq:Rc} is provided in Appendix \ref{RateFormulaProof}. Note that the optimization framework developed in this paper can be used along with other rate formulas as well.

\subsection{Traffic model} 
 \label{Sec:TrafficModel}
In a BS with massive MIMO, a certain number of users are served simultaneously using  a certain number of active antennas. The rate achieved by each user depends on the number of users served simultaneously and the number of antennas that are active. As shown in~\cite{Emilj},  there is a particular combination of number of active antennas and simultaneously served users that yields the maximum EE.  In this study, this optimum number of antennas  and users are denoted by  $M_{\max}$ and  $K_{\max}$ respectively and are the upper limit for the number of active antennas and simultaneously served users. The $K_{\max}$ users that can be served simultaneously is derived by modeling the load at each BS as a state-dependent $M/G/m/m$ queue. The $M/G/m/m$ queue dictates that for exponential arrival and general distribution of service time, maximum $m$ number of users can be served simultaneously (number of servers = $m$, i.e., $K_{\max}$, waiting place = 0). The state dependency arises from the fact that the user rate depends on the number of users the BS serves simultaneously. The network is assumed to be  dimensioned in a way that the data carried by a cell while serving $K_{\max}$ users  corresponds to the peak load of the DLP. In a queuing system with no buffer space, the blocking probability is equal to the probability of having the system 100\% loaded, i.e., probability of having $K_{max}$ number of users.  This can be explained by the PASTA (Poisson Arrivals See Time Averages) property which holds when the arrivals are following the Poisson process\cite{PASTA}.  Therefore, $2$\% blocking probability at the peak load  means  the probability of serving  $K_{\max}$ users simultaneously is $0.02$.  In order to capture the daily traffic variations, we consider the  DLP proposed for data traffic in Europe  \cite{EarthModel} and DLP proposed for residential and business areas in \cite{CLRL}.   The steady state  probabilities for the random number of users in the BS $c$, $\pi_c(n)\equiv Pr[K_c\!=\!n]$, are as follows \cite{ Cruz}
\begin{equation}
\pi_c(n)\!=\!\!\left[\frac{\left[\lambda \frac{s}{R_c(1)}\right]^n}{n! f(n)f(n-1)...f(2)f(1)} \right]\pi_c(0), n \in \mathcal{U}_c, 
\end{equation}
where $\pi_c(0)$ is the probability that there is no user in cell $c$ and is given by
\begin{equation}
\pi_c^{-1}(0)=1+ \sum_{i=1}^{K_{\max}}\left(\frac{\left[\lambda\frac{s}{R_c(1)}\right]^i}{i! f(i)f(i-1)... f(2) f(1)}\right)  
\end{equation}
 \begin{figure}[!tp] 
\centering
        \includegraphics[ scale=0.8]{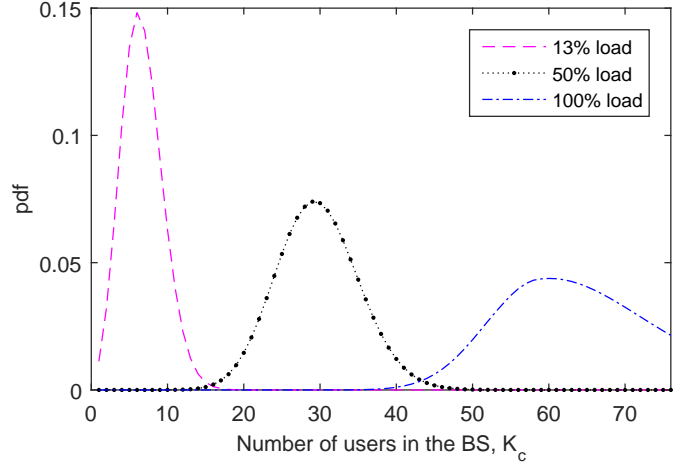}	
\caption{	 Distribution of active users in the cell  at 13\%, 50\%  and 100\%  load with the parameters provided in Section \ref{sec:NumericalResults}. } 
\label{fig:pdf}
\end{figure}
where $\frac{s}{R_c(1)}$ is the expected service time when BS $c$ serves  a single user, $f(n)=R_c(n)/R_c(1)$ where $R_c(n)$
 is the  data rate per user while serving $n$ number of users, $\lambda$ is the arrival rate, and $s$ is the total data traffic contribution by a single user. Note that we use (\ref{eq:Rc}) to find the data rates at different user states.  In order to find the steady state probability distribution throughout the day, first we set the values for $\lambda$ and $s$. As we allow 2\% blocking probability  while serving $100$\% load, we find the maximum $\lambda$, i.e., $\lambda_{\max}$ that gives $\pi(K_{\max}) = 0.02$ for a fixed $s$. Assuming that $s$ remains constant, we derive the average number of users for other time intervals following the DLP using $\lambda_{\max}$. For example, from the DLP, if the average load at any time interval $h$ is  $x$\%, the corresponding average number of users is $\lambda_h = \frac{x}{100}\cdot \lambda_{\max}$. This  $\lambda_h$ has been used as the input to the $M/G/m/m$ queue to find the steady state probability distribution  of the users during time interval $h$. Fig. \ref{fig:pdf} gives three example plots of the user distribution for 13\%, 50\% and 100\% loads with the parameters provided in Section \ref{sec:NumericalResults}. Note that for 100\% load, the probability of serving $K_{\max} = 76$ users is $0.02$.

\subsection{Power consumption model}
The total power consumed in a BS $c$ with $M_c$ active antennas and $K_c$ users is modeled as
\begin{equation} 
P_c^{tot}(K_c, M_c)= M_c P_{PA} (p )+P_\mathrm{BB} (K_c, M_c)+P_\mathrm{Oth} 
\end{equation}
where $P_{PA}(p)$ gives the power consumption of a PA when the average output power is $p$, $P_\mathrm{BB} (K_c, M_c)$ is the base band signal processing power when the BS serves $K_c$ number of users simultaneously with $M_c$ number of antennas. $P_\mathrm{Oth}$ includes the load-independent power for site cooling, control signal, DC-DC conversion loss, etc.

\subsubsection{Baseband power consumption}
For baseband and fixed power consumption we use the model proposed in \cite{Emilc}. The total circuit power is given by 		
\begin{equation} 
\label{eq:Pcp}
P_\mathrm{CP}=P_\mathrm{TC}+P_\mathrm{CE}+P_\mathrm{C/D}+P_\mathrm{LP}. 	
\end{equation}				
The power consumed in the transceiver is given by $P_\mathrm{TC}=M P_\mathrm{BS}+P_\mathrm{SYN}$  where $P_\mathrm{BS}$ is the power required to run the circuit components,  e.g., converters, mixers and filters attached to each antenna at the BS and $P_\mathrm{SYN}$ is the power consumed by the local oscillator. $P_\mathrm{CE}$ is the power required for channel estimation. $P_\mathrm{C/D}$ is the total power required for channel encoding, $P_\mathrm{COD}$,  and channel decoding, $P_\mathrm{DEC}$.   $P_\mathrm{LP}$ is the power consumed by linear processing.
According to \cite{Emilc}, the total baseband power can be expressed as 
\begin{equation} 
\label{eq:TPbb}
P_\mathrm{BB} (M_c, K_c) = \underbrace {\mathcal{A} K_c R_c+\sum_{i=0}^3 C_{0,i} K_c^i}_{C_0^{BB}}+ M_c \underbrace{\sum_{i=0}^2 C_{1,i} K_c^i}_{C_1^{BB}}  
 \end{equation}
where  $C_{0,0}=P_\mathrm{SYN},  C_{0,1}=0,  C_{0,2}=0, C_{0,3}=\frac{B}{3 T_c L_\mathrm{BS}}, C_{1,0}= P_\mathrm{BS}, C_{1,1}=\frac{B}{L_\mathrm{BS}}(2+\frac{1}{T_c}), C_{1,2}=\frac{3B}{L_\mathrm{BS}}, \mathcal{A}=P_\mathrm{COD}+P_\mathrm{DEC}$, $R_c$ is the average user rate  as given in  (\ref{eq:Rc}) and $B$ is the  bandwidth. The power consumption $C_1^{BB}$ gets multiplied with $M_c$ whereas $C_0^{BB} $ does not. 
 
\subsubsection{Power amplifier}
We consider both TPA and ET-PA in this analysis. In case of the TPA, the maximum efficiency is achieved only when it operates  near the compression region, i.e., at maximum output power. On the other hand, ET-PA achieves better efficiency throughout the operating range by tracking the radio frequency (RF) signal and regulating the supply voltage accordingly. In order to  transmit mean output power $p$, the total input power needed by TPA  can be approximated as~\cite{Persson, Hossain} 
\begin{equation}\label{eq:T-PA}
P_{TPA}(p)\approx \frac{1}{\eta} \sqrt{p\cdot P_{\max,\text{PA}}}
\end{equation}
where $\eta$ denotes the maximum PA efficiency when transmitting maximum output power, $P_{\max,\text{PA}}$. Note  that, $p$ reflects the average power and the PA must be able to handle the peak power, i.e., $p$ must be at least $8$ dB less than $P_{\max,\text{PA}}$ as the typical  PAPR of recent technologies, e.g., OFDM is $8$ dB. 

In case of ET-PA, the total power consumption is approximately a linear function of the actual transmit
power~\cite{Hossain},
\begin{equation} \label{eq:ET-PA}
P_{ET-PA}(p)\approx \frac{1}{(1+\epsilon)\eta}\left(p+\epsilon P_{\max,\text{PA}}\right)
\end{equation}
where $\epsilon \approx 0.0082$ is a PA dependent parameter. 
The total power required by $M_c$ antennas in order to deliver output transmit power $p$ per antenna is  $ C_1^{PA}M_c$ where 
\begin{equation}
C_1^{PA} = \begin{cases}  \frac{1}{\eta} \sqrt{p P_{\max,\text{PA}}} , & \mbox{if } \mbox{ TPA} \\ \frac{p}{(1+\epsilon)\eta}+ \frac{\epsilon P_{\max,\text{PA}}}{(1+\epsilon)\eta}, & \mbox{if } \mbox{ ET-PA} \end{cases}
\end{equation} 
Since part of the power consumption increases approximately linearly with the number of antennas and some other parts do not, we can write total power as 
\begin{equation}
\label{eq:ETPAeff}
P_{total} \approx C_0+C_1 M_c 
\end{equation}
 where $C_0=C_0^{BB}+P_{Oth}$ and $C_1=C_1^{PA}+C_1^{BB}$.
\section{EE maximization game} \label{sec:EEMG}
In this section, we utilize the models that were introduced in the previous section to formulate the EE optimization problem under a game theory framework. As the transmit power of a BS is proportional to the number of active antennas and depends on the transmit power of interfering BSs due to inter-cell interference,  the number of active antennas for different cells are coupled.  As a result, the objective function in \eqref{eq:problem} for cell $c$ is dependent on the number of antennas used by other cells and there is no closed form expression for $M_c$ that would maximize EE. The objective function in ~\eqref{eq:problem} involves summation over all the user states at each time interval. We drop $(h)$  as the optimization for different time interval $h$ can be carried out separately following the same procedure and   rewrite the problem (\ref{eq:problem}) using  \eqref{eq:Rc} and \eqref{eq:ETPAeff} for a single time interval as

\begin{subequations} 
\label{eq:ProblemLong}
\renewcommand\theequation{\theparentequation\roman{equation}}
\begin{align} 
   {\mathop {{\text{ maximize}}}_{\mathcal{M}_c}} &\;\;   
     \sum_{n=1}^{K_{max}} \!\!\pi_c(h, n) \frac{n \beta \log{({1-n M_c \gamma_{c,1}+\gamma_{c,1}M_c^2})}}{C_0+C_1 M_c} \label{eq:P21}\\
    {{\text{subject to}}} &\;\; { n+1<M_c(n)< M_{\max}}   \label{eq:P22}  
\end{align}
\end{subequations}
where  $\gamma_{c,1} =\frac{\frac{1}{n}p}{(G_{cc} \sigma^2+ \sum_{d \neq c} G_{cd} p M_d )}$ is the achieved signal to interference plus noise ratio (SINR) in cell $c$ when using a single antenna and $\beta = (1-\frac{\alpha K_{\max}}{T_c}) \frac{B}{\text{ln}2}$. Furthermore,  when the BS is serving a particular number of users, $n$, the objective function  can be  broadly written as
\begin{equation}
E_c  \approx \frac  {n \beta\log{({1-n M_c \gamma_{c,1}+\gamma_{c,1}M_c^2})}} {C_0+C_1 M_c}. 
\label{eq:Ebfinal}
\end{equation}
The optimization  problem \eqref{eq:Ebfinal} is not concave in $M_c$ and $M_d$ for either  TPA or ET-PA. However, this objective function has a nice property of increasing differences in $M_c$ and $M_d, \quad \! \! \forall d\in \mathcal{C}, d\ne c,$ in the sense of Topkis ~\cite{Topkis}. Because of that we resort to an algorithm based on best response iteration in a game theoretic framework. In this framework, each BS  iteratively  finds the most energy efficient number of active antennas taking into account the interference from the surrounding BSs.

In order to formulate \eqref{eq:ProblemLong} in a game theoretic framework, let us define  the strategy space, $\mathcal{S}_c$, which contains the set of  feasible number of antennas used by any cell $c$ at different user states,  $ n \in\mathcal{U}_c$. Note that the set $\mathcal{S}_c$ is a function of  ${\mathbf{M}}_{-c}$, i.e., the  number of antennas used by all other cells that interfere cell  $c$  and can be written as  
\begin{equation}
\label{eq:strategy}
\mathcal{S}_c(\mathbf{M}_{-c})\!=\!\left\{\mathcal{M}_c(n) \!: \!n\!+\!1\!\! \leq \!  \mathcal{M}_c(n) \!\leq\! M_{\max},  \forall n\in\mathcal{U}_c\right\} 
\end{equation} 
where the lower bound comes from the constraint for zero forcing. 

Next, we  define the {\em EE maximization game}, $\mathcal{G}(\mathcal{K},\mathcal{S},\mathcal{E})$ where the players are the BSs,  $S=S_1 \times S_2 \times \cdots \times S_C$ is the strategy space (i.e., space of number of active antennas), and ${\mathcal{E}}=E_c(\mathcal{M}_{c},\mathbf{M}_{-c}), c\in\mathcal{C}$ is the utility of the players (i.e., EE  of the different cells). 
 The {\em best response} is the strategy (or strategies) that produces the most favorable outcome for a player given other players' strategies. The use of the best response strategy gives rise to a dynamic system of the form
\begin{equation}
\label{eq:maximize}
\mathcal{M}_c={\arg\max}_{\mathcal{M}_c\in\mathcal{S}_c(\mathbf{M}_{-c})} 
{E}_{c}(\mathcal{M}_c,\mathbf{M}_{-c}), \forall c. 
\end{equation}

In order to study the convergence  properties of the best response strategy of dynamically adapting the number of antennas, let us first look at the following definitions available in ~\cite{Levin, Altman, Topkis}. 
\begin{definition}
The strategy vector $\mathcal{M}_c^*$  is said to be a Nash equilibrium of the game  $\mathcal{G}(\mathcal{K},\mathcal{S},\mathcal{E})$, if no player can unilaterally increase its utility function, i.e., 
\[
{E}_{c}(\mathcal{M}_{c}^*,\mathbf{M}_{-c}^*)\ge
{E}_{c}(\mathcal{M}_{c},\mathbf{M}_{-c}^*), \forall 
c\in \mathcal{C}, \mathcal{M}_c\in \mathcal{S}_c
\]
where $\mathbf{M}_{-c}^*$ is the vector of antennas used by the cells other than $c$ at Nash equilibrium.
\end{definition} 
\begin{definition}
Let a function $\psi:\mathcal{X}\times \mathcal{Y} \to \mathbb{R}$ be continuous and twice differentiable. Then, the following two statements are equivalent
\begin{itemize}
\item $\frac{\partial^2 \psi(x,y)}{\partial x\partial y }\geq 0, 
\forall y \in \mathcal{Y}, \forall x \in \mathcal{X}$. 
\item The function $\psi$ has increasing differences in $(x,y)$ i.e. 
\[
\psi(x',y')-\psi(x,y')\geq  \psi(x',y)-\psi(x,y), 
\forall (x',y') 
\]
where $x'\geq x, y'\geq y$.
\end{itemize}
\end{definition}

The function $\psi(x,y)$ does not have to be nicely behaving,  nor does $\mathcal{X}$ and $\mathcal{Y}$ have to be intervals.
\begin{definition}
The  EE maximization game, $\mathcal{G}(\mathcal{K},\mathcal{S},\mathcal{E})$, is said to be supermodular if
\begin{itemize}
\item the strategy space $\mathcal{S}_c, \forall c$ is a compact subset of $\mathbb{R}$;
\item the utility function ${E}_{c}, \forall c$ is upper semi-continuous in $\mathcal{M}_c$, $\mathbf{M}_{-c}$;
\item the utility function ${E}_{c}$,  has increasing differences in $\mathcal{M}_c$, $\mathbf{M}_{-c}$.
\end{itemize}
\end{definition}

 \begin{proposition}
 \label{SupermodularGame}
 In a supermodular game, a Nash equilibrium exists. Also, if we start with a feasible policy $\mathcal{M}$, then the sequence of best response iterations monotonically increases in all components and converges to a Nash equilibrium. 
 \end{proposition}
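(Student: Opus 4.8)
\emph{Proof strategy.}
The plan is to apply the standard lattice-theoretic results for supermodular games~\cite{Topkis}: existence of an equilibrium via Tarski's fixed-point theorem, and monotone convergence of the best-response dynamics via Topkis's monotonicity theorem. First I would equip the profiles with the componentwise partial order $\mathcal{M}\le\mathcal{M}'$ iff $\mathcal{M}_c(n)\le\mathcal{M}'_c(n)$ for every cell $c\in\mathcal{C}$ and every user state $n\in\mathcal{U}_c$. Under this order each strategy set $\mathcal{S}_c$ is a compact lattice --- in the abstract formulation a compact subset of $\mathbb{R}$, and in the vectorized form~\eqref{eq:strategy} the product of the finite integer intervals $\{n+1,\dots,M_{\max}\}$ over $n\in\mathcal{U}_c$ --- hence a complete lattice, with least element $\underline{\mathcal{M}}_c$ given by $\underline{\mathcal{M}}_c(n)=n+1$ and greatest element equal to $M_{\max}$ in every coordinate. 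Consequently the product $\mathcal{S}=\prod_{c\in\mathcal{C}}\mathcal{S}_c$ is a complete lattice as well.

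\emph{Step 1: the best response is a monotone self-map of $\mathcal{S}$.}
Next I would verify that, for a fixed opponents' profile $\mathbf{M}_{-c}$, the maximizer set $\mathrm{BR}_c(\mathbf{M}_{-c})={\arg\max}_{\mathcal{M}_c\in\mathcal{S}_c}E_c(\mathcal{M}_c,\mathbf{M}_{-c})$ is nonempty (by upper semi-continuity of $E_c(\cdot,\mathbf{M}_{-c})$ on the compact set $\mathcal{S}_c$) and, because the objective in~\eqref{eq:ProblemLong} is additively separable over the user states $n$ --- so that $E_c$ is in particular supermodular in $\mathcal{M}_c$ --- is a compact sublattice of $\mathcal{S}_c$ and therefore has a greatest element $b_c(\mathbf{M}_{-c})$. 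I would then invoke the supermodularity hypothesis --- each $E_c$ has increasing differences in $\mathcal{M}_c$ and $\mathbf{M}_{-c}$ --- together with Topkis's monotonicity theorem to conclude that the argmax correspondence, and hence its greatest selection, is nondecreasing in the parameter: $\mathbf{M}_{-c}\le\mathbf{M}'_{-c}\Rightarrow b_c(\mathbf{M}_{-c})\le b_c(\mathbf{M}'_{-c})$. (Should the feasible set genuinely depend on $\mathbf{M}_{-c}$ as the notation in~\eqref{eq:strategy} suggests, one checks that it is ascending in $\mathbf{M}_{-c}$, exactly the extra hypothesis Topkis's theorem needs for a constrained maximization.) This makes the joint best-response map $B(\mathcal{M})=\big(b_1(\mathbf{M}_{-1}),\dots,b_C(\mathbf{M}_{-C})\big)$ an order-preserving self-map of the complete lattice $\mathcal{S}$.

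\emph{Step 2: existence and the monotone iteration.}
With $B$ order-preserving on a complete lattice, Tarski's theorem gives a nonempty set of fixed points; any fixed point $\mathcal{M}^*$ satisfies $\mathcal{M}^*_c\in\mathrm{BR}_c(\mathbf{M}^*_{-c})$ for all $c$, so no cell can unilaterally improve its EE and $\mathcal{M}^*$ is a Nash equilibrium --- this settles the existence claim. For the dynamics I would start from the least feasible profile $\mathcal{M}^{(0)}=\underline{\mathcal{M}}$ and set $\mathcal{M}^{(k+1)}=B(\mathcal{M}^{(k)})$. Since $\underline{\mathcal{M}}$ is the bottom of the lattice, $\mathcal{M}^{(1)}\ge\mathcal{M}^{(0)}$ holds trivially, and repeated application of the monotone map $B$ gives, by induction, $\mathcal{M}^{(k+1)}\ge\mathcal{M}^{(k)}$ for all $k$; thus the iterates increase in every component. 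Each coordinate of $\mathcal{M}^{(k)}$ is then a nondecreasing integer sequence bounded above by $M_{\max}$, hence eventually constant, so the iteration reaches after finitely many steps a profile $\mathcal{M}^{\infty}$ with $B(\mathcal{M}^{\infty})=\mathcal{M}^{\infty}$, i.e.\ a Nash equilibrium. (In a general continuum supermodular game the limit of the increasing iterates is still a fixed point because upper semi-continuity of the $E_c$ keeps the graph of each $\mathrm{BR}_c$ closed; an arbitrary feasible start is covered by the Gauss--Seidel form of the update, or by sandwiching the run between those started from $\underline{\mathcal{M}}$ and from the top profile.)

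\emph{Main obstacle.}
The only substantive step is Step 1: establishing through Topkis's theorem that a payoff with increasing differences has a best response that is monotone in the opponents' actions, and --- since an action here is a vector indexed by the user state --- justifying that the \emph{largest} maximizer may be selected, which is exactly where the additive separability of~\eqref{eq:ProblemLong} over user states and the complete-lattice structure of the integer strategy sets come in. The remaining pieces --- a maximizer existing by upper semi-continuity and compactness, a bounded monotone integer sequence converging, and a closed-graph argument for the limit --- are routine. I would also point out that Proposition~\ref{SupermodularGame} delivers existence of \emph{some} Nash equilibrium and monotone convergence of the stated iteration, but not uniqueness: a supermodular game generally has a complete lattice of equilibria, so any uniqueness statement has to rest on a separate argument, e.g.\ a contraction or diagonal-dominance condition on the inter-cell interference couplings.
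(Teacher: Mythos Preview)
Your argument is correct and, in fact, considerably more thorough than the paper's own treatment: the paper does not prove Proposition~\ref{SupermodularGame} at all but simply invokes \cite[Theorem~1]{Altman}, which packages exactly the Tarski/Topkis machinery you have spelled out. In that sense you have not taken a different route so much as unpacked the black box the authors cite.

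Two remarks are worth making. First, your care about the starting point is well placed: the proposition as stated (``if we start with a feasible policy $\mathcal{M}$'') is literally stronger than what the standard supermodular-game theorem delivers, since monotone increase of the best-response iterates is guaranteed only from the lattice bottom (or monotone decrease from the top), not from an arbitrary feasible profile. Your parenthetical on Gauss--Seidel updates and sandwiching is the right way to reconcile this, and it is more precise than either the proposition or the one-line citation in the paper. Second, your closing observation that supermodularity alone yields a complete lattice of equilibria rather than a unique one is also apt; the paper asserts convergence to ``a unique Nash equilibrium'' in the introduction but never substantiates uniqueness, and Proposition~\ref{SupermodularGame} indeed does not provide it.
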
   
 \begin{proof}
 These properties of the supermodular game follow from~\cite[Theorem 1]{Altman}.
 \end{proof}
In order to show that the best response iteration strategy~\eqref{eq:maximize} leads to a Nash equilibrium, it is sufficient to show that the {\em EE maximization game} $\mathcal{G}(\mathcal{K},\mathcal{S},\mathcal{E})$ is supermodular. 
   
\begin{proposition}    
{\em EE game} $\mathcal{G}(\mathcal{K},\mathcal{S},\mathcal{E})$ is supermodular.
\end{proposition}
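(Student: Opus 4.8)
The plan is to verify, one by one, the three defining conditions of a supermodular game for $\mathcal{G}(\mathcal{K},\mathcal{S},\mathcal{E})$, reducing everything to the scalar per-user-state utility $E_c$ of \eqref{eq:Ebfinal}. Two of the conditions are immediate. For each time interval and each user state $n$ the admissible antenna counts form the finite set $\{n+1,\dots,M_{\max}\}$ of \eqref{eq:strategy} (the lower bound being the zero-forcing feasibility requirement), which is a compact subset of $\mathbb{R}$; and $E_c$, being for $M_c>n$ the ratio of the strictly positive continuous numerator $n\beta\log\bigl(1+\gamma_{c,1}M_c(M_c-n)\bigr)$ and the strictly positive affine denominator $C_0+C_1M_c$ (with $C_0,C_1$ the $M_c$- and $\mathbf{M}_{-c}$-independent constants of \eqref{eq:ETPAeff}), is continuous — hence upper semi-continuous — in $(M_c,\mathbf{M}_{-c})$, and in any case every function on the finite lattice $\mathcal{S}$ is upper semi-continuous. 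Since the coordinates of $\mathbb{M}_c$ indexed by distinct pairs $(h,n)$ do not interact in \eqref{eq:Q11}, and increasing differences are preserved both under nonnegative-weighted sums and under forming product lattices, the remaining task is to show that the scalar utility $E_c$ of \eqref{eq:Ebfinal} — which is trivially supermodular in the scalar own-variable $M_c$ — has increasing differences in $(M_c,\mathbf{M}_{-c})$.

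The crux is this increasing-differences property. The key observation is that $\mathbf{M}_{-c}$ enters $E_c$ only through $\gamma_{c,1}$, i.e.\ only through the aggregate interference-plus-noise level $I_c:=\sigma^2 G_{cc}+p\sum_{d\neq c}G_{cd}M_d$, which is an increasing affine function of every $M_d$. Because increasing differences is preserved under composition with a monotone map \cite{Topkis}, it suffices to prove that the relaxation of $E_c$ to a real variable $M_c$ satisfies $\partial^2 E_c/(\partial M_c\,\partial I_c)\ge 0$ on $M_c\in(n,M_{\max}]$; equivalently, $E_c$ has \emph{decreasing} differences in $(M_c,\gamma_{c,1})$, the natural reading being that the marginal EE gain from one more antenna is smaller when the per-antenna SINR is already large. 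Writing $Q(M_c):=pM_c(M_c-n)$ and $D(M_c):=C_0+C_1M_c$, so that $E_c=\frac{n\beta}{D(M_c)}\ln\bigl(1+\frac{Q(M_c)}{nI_c}\bigr)$, one computes
\[
\frac{\partial E_c}{\partial I_c}=-\,\frac{n\beta\,Q(M_c)}{D(M_c)\,I_c\,\bigl(nI_c+Q(M_c)\bigr)},
\]
and differentiating once more in $M_c$ and clearing the strictly positive factor $n\beta/I_c$, the inequality $\partial^2 E_c/(\partial M_c\,\partial I_c)\ge 0$ becomes the polynomial inequality
\[
C_1\,Q(M_c)\bigl(nI_c+Q(M_c)\bigr)\;\ge\;nI_c\,D(M_c)\,Q'(M_c),\qquad Q'(M_c)=p(2M_c-n).
\]
Granting this, Proposition~\ref{SupermodularGame} — via \cite[Theorem~1]{Altman} — yields existence of a Nash equilibrium and the monotone convergence of the best-response iteration \eqref{eq:maximize}.

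The hard part will be establishing this last polynomial inequality on the whole strategy set $\{n+1,\dots,M_{\max}\}$. It weighs the quadratic array gain $Q(M_c)$ — which makes the left-hand side grow like $M_c^4$ through the $Q(M_c)^2$ term — against the linearly growing transceiver/amplifier denominator $D(M_c)$ and the diminishing returns of the logarithm. For $M_c$ sufficiently larger than the number of served users $n$, i.e.\ in the genuine massive-MIMO regime the paper operates in, the left-hand side dominates and the inequality holds comfortably; but the margin shrinks as $M_c\downarrow n+1$, and for a small per-antenna SINR $\gamma_{c,1}$ the two sides become comparable near that lower boundary (so the ``diminishing returns'' intuition above is only reliable once $M_c$ is well above $n$). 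Closing the argument cleanly therefore hinges on either exploiting that the EE-optimal — hence only operationally relevant — antenna counts lie well above $n$, or imposing a mild condition on $\gamma_{c,1}$ and the path-loss ratios $G_{cd}/G_{cc}$; this boundary analysis, rather than the routine differentiations, is where the real effort goes.
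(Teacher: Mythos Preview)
Your reduction is correct up to the cross-derivative: the first two conditions of Definition~3 are routine, the decoupling across user states is valid, and the observation that $\mathbf{M}_{-c}$ enters $E_c$ only through the aggregate interference $I_c$ is the right one. The gap is that working directly with $E_c$ leads you to the polynomial inequality
\[
C_1\,Q(M_c)\bigl(nI_c+Q(M_c)\bigr)\;\ge\; nI_c\,D(M_c)\,Q'(M_c),
\]
and, as you yourself suspect, this inequality is not merely delicate near the boundary---it is \emph{false} there. At $M_c=n+1$ one has $Q=p(n+1)$ and $Q'=p(n+2)$; in the interference-dominated regime $nI_c\gg Q$, the left side is of order $C_1p(n+1)\,nI_c$ while the right side is of order $nI_c\bigl(C_0+C_1(n+1)\bigr)p(n+2)$, giving a ratio of roughly $1/(n+2)$. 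No mild condition on $\gamma_{c,1}$ or on the path-loss ratios rescues this uniformly over the strategy set, and restricting to ``operationally relevant'' antenna counts is not a proof of the proposition as stated.

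The idea you are missing---and the one the paper uses---is to take the logarithm \emph{before} differentiating. Because the denominator $D(M_c)=C_0+C_1M_c$ is independent of $\mathbf{M}_{-c}$, one has $\log E_c=\log\bigl(n\beta\log\Gamma\bigr)-\log D(M_c)$ with $\Gamma=1+Q(M_c)/(nI_c)$, and the mixed partial $\partial^2/(\partial M_c\,\partial I_c)$ annihilates the second summand. What remains is to show $\partial^2(\log\log\Gamma)/(\partial M_c\,\partial I_c)\ge 0$; a short computation with $z:=Q(M_c)/(nI_c)$ collapses this to the universal inequality $z\ge\log(1+z)$, valid for every $z\ge 0$ with no boundary caveats. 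The logarithm thus turns the troublesome multiplicative factor $D(M_c)$ into an additive term that drops out of the cross-derivative---precisely the entanglement that makes your direct inequality fail. One residual subtlety: increasing differences of $\log E_c$ does not in general imply increasing differences of $E_c$; but since $\log$ is strictly increasing, the best-response correspondence \eqref{eq:maximize} is unchanged, so the game with utilities $\log E_c$ is supermodular and Proposition~1 still delivers existence of, and monotone convergence to, a Nash equilibrium.
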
 
\begin{proof}
 The {\em EE game} $\mathcal{G}(\mathcal{K},\mathcal{S},\mathcal{E})$ is proved to be supermodular in Appendix \ref{SupermodularProof} for the rate model described in Section \ref{sec:SystemModel}.
\end{proof}

\section{Optimization algorithm} 
\label{sec:OptimizationAlgorithm}
 In this section, we  describe the best response optimization algorithm  and present it in the form of pseudocode, see Algorithm $1$.  Initially, the number of antennas of all the cells are set  to the maximum, $M_{\max}$. We start the best response iterations from the $c$-th cell. The average interference level received by any user in the cell  $c$, $I_c({\mathbf{M}}_{-c})$, is a function of the number of antennas used in the other cells, $\mathbf{M}_{-c}$. In order to approximate the interference level  $I_c(\mathbf{M}_{-c})$, it is assumed that the $d$-th interfering BS, $d\neq c$, transmits with the number of antennas found from the weighted mean of the number of antennas for its different user states i.e.,  $\sum_{n=1}^m{\mathcal{M}_d(n) \pi_d(n)},  \forall d\!:\!d\neq c$. Once the interference caused to the  $c$-th cell is calculated, we identify the strategy space for the $c$-th cell based on equation~\eqref{eq:strategy}.  Finally, we find the vector of antennas that maximizes EE at different user states,  $\mathcal{M'}_c \leftarrow {\arg\max}_{\mathcal{M}_c\in\mathcal{S}_c(\mathbf{M}_{-c})} E_c(\mathbf{M}_{-c})$ where '$\leftarrow$' indicates the direction of value assignment. We  iterate over all the cells and  optimize the antenna vector for each cell. The iterations are carried out until the antenna vector for each cell converges, i.e., there is one iteration where none of the antenna numbers changes. 

 \begin{algorithm}
\label{algo}
\caption{Best response iteration}
\begin{algorithmic}[!t]
\STATE ${\mathcal{M}}_{c} \leftarrow M_{\max} \cdot \mathbf{1}, \forall c\in\cal{C}$. 
\STATE $\textnormal{maxtol} \leftarrow 1 $
\WHILE{$\textnormal{maxtol} \neq 0$} 
\FORALL{$c\in\cal{C}$}
\STATE ${\mathbf{M}}_{-c}\leftarrow \sum_n{\mathcal{M}_d(n)\pi_n}, \forall n\in {\cal{U}}_{d},\forall d: d \neq c$ 
\STATE Define strategy space $\mathcal{S}_c$ based on~\eqref{eq:strategy}
\STATE $\mathcal{M'}_c \leftarrow {\arg\max}_{\mathcal{M}_c\in\mathcal{U}_c(\mathbf{M}_{-c})}
{E}_{c}(\mathbf{M}_{-c})$
\STATE $\mathrm{tol}_c\leftarrow ||{\mathcal{M'}}_{c} - {\mathcal{M}}_{c}||_{0}$\footnotemark
\STATE $\mathcal{M}_c \leftarrow \mathcal{M'}_c$
\ENDFOR
\STATE $\textnormal{maxtol} \leftarrow {\text{max}}_{c}(\mathrm{tol}_c) $
\ENDWHILE
\end{algorithmic}
\end{algorithm}

The number of antennas for a user state $n$,  $M_c(n),  n\in {\mathcal{U}}_c$, is independent of the antennas used  for the other user states in the same cell. Therefore, in order to carry out the optimization step, $\mathcal{M'}_c \leftarrow {\arg\max}_{\mathcal{M}_c\in\mathcal{S}_c(\mathbf{M}_{-c})} {E}_{c}(\mathbf{M}_{-c})$, it is sufficient to solve the optimization problem~\eqref{eq:ProblemLong} separately for each user state.  For the given number of antennas of the interfering BSs, the interference is known. For known interference the EE problem for any user state is a quasi-concave function of $M_c$ as it is a ratio of a concave and an affine function of $M_c$ \cite{Ratio}. As a result, it suffices to compute the value of the objective function at the stationary point and at the end points of the interval to identify the optimal $M_c$. 

\footnotetext{ $||\cdot||_{0}$ is the zero norm which gives the number of nonzero elements in the set.}
 
\section{Numerical analysis}
\label{sec:NumericalResults}
We consider the downlink of a cellular network with $19$ regular  hexagonal cells where the wrap around technique is applied in order to get rid of the boundary effect. The maximum cell radius is $500$ m if not otherwise specified. The users are uniformly distributed in the cell with a minimum distance of $35$ m from the cell center. Note that we consider $15000$ test points in each cell in order to calculate the average channel variance from the serving BS, $G_{cc}$ and the average inter-cell interference power, $G_{cd}$. Traffic is assumed to be homogeneously and independently distributed over the cells. As a result, the network load and cell load are the same and has been used interchangeably. 
\begin{table}[!ht]
\caption{Simulation parameters}
\label{RP} \centering
\begin{tabular}{|l|l|} \hline
\multicolumn {2} {|c|}{\textbf {Reference parameters }} \\
 \hline
  \textit{Parameter} & \textit{Value} \\
 \hline
 Number of cells & $19$ \\
 \hline
Grid size inside each cell & $15000$ points \\
 \hline
Cell radius: $d_{\max}$ & $500$ m \\
\hline
Minimum distance: $d_{\min}$ & $35$ m \\
 \hline
Maximum PA efficiency  at $P_{\max,\text{PA}}$ & $80$\%\\
 \hline
Path loss at distance $d$ & $\frac{10^{-3.53}}{||d||^{3.76}}$\\
 \hline
Local oscillator power: $P_\mathrm{SYN}$  & $2$ W\\
\hline
BS circuit power: $P_\mathrm{BS}$ & $1$ W \\
\hline
Other power: $P_\mathrm{Oth}$  & $18$ W\\
\hline
Power for data coding: $P_\mathrm{COD}$  & $0.1$ W/(Gbit/s) \\
\hline
Power for data decoding: $P_\mathrm{DEC}$  & $0.8$ W/(Gbit/s) \\
\hline
Computational efficiency at BSs:$L_\mathrm{BS}$ & 12.8 Gflops/W \\
\hline
Bandwidth  & $20$ MHz\\
 \hline
Total noise power: $B\sigma^2$  & -$96$ dBm\\
 \hline
Channel coherence interval: $T_c$  & $5000$ symbols \\
\hline
Pilot reuse factor: $\alpha$ & 7 \\
\hline
\end{tabular}
\end {table}
 However, the solution and methodology are still valid for non-homogeneous area traffic distribution scenario although energy saving values may slightly vary. In order to find the overall performance throughout the day we divide the day in 120 intervals, i.e.,  12 minute average load is used to find the user distribution. We use the DLP suggested for residential areas in \cite{CLRL}  for this purpose.  The parameters for the simulation are given in  Table \ref{RP}. Some of them are taken from~\cite{Emilj}.

\subsection{Reference system }
\label{globalAntennaUser}
In this work, we propose a distributed algorithm  for a multi-cell massive MIMO system where each cell maximizes average EE adaptively with the variation of network load. In order to compare the performance of the devised mechanism, we need to have a reference system. In our reference system the BSs do not adapt the number of antennas to the  variation of load, i.e., use $M_{max}$ if there is at least one user to serve and turns off all the antennas otherwise. Note that minimum network load is considered to be $10\%$ in order to account for the system information and control signaling.  We dimension our reference system in a  way so that it achieves highest EE when each BS  serves $K_{max}$ users with $M_{max}$ antennas, i.e., the network is most energy efficient when serving peak load according to the DLP.   From equation \eqref{eq:problem} and  \eqref{eq:Rc}, it can be observed that for a given interference, i.e., power used by other BSs, the EE of BS $c$ depends on  $M_c$ , $K_c$,  and $p$.  In order to find $K_{\max}$, $M_{\max}$  and optimum $p$ for the reference system, exhaustive search is used to search over different combinations of $M_c$ and $K_c$ and  optimal power allocation, $p$  is computed for each combination as $M_c$ and $K_c$ are integers and  the EE is a quasi-concave function of $M_c$, $K_c$.  In case of ET-PA, for a given  $M_c$ and  $K_c$, the EE is a quasi-concave function of  $p$ as it reduces to a ratio of a concave and an affine function\cite{Ratio}. However, for TPA, the function becomes the ratio of concave and convex function which can be maximized by the branch and bound algorithm proposed in \cite{Ratio2}. For the parameters given in Table \ref{RP}, the maximum EE is achieved when serving $76$ users with $158$ antennas where the transmit power of each antenna, $p$ is  $0.1$ Watt. Similarly, for ET-PA, $K_{\max}= 68$, $M_{\max}= 134$ and $p= 0.18$ Watt. Once the reference system is dimensioned, the number of antennas are kept fixed at $M_{\max}$ in order to find the reference EE and user rates for network loads other than the peak. EE is calculated by iterating over the activity factor of each BS starting from a given inter-cell interference and converging on the activity factor, defined as the probability of finding a BS active at a particular instant.  Note that at the peak load the activity of each BS is considered to be $1$.

\subsection{Interplay between number of active users and active antennas}The probability of getting different number of users served by the BS simultaneously depends on the average network load, for example, see Fig. \ref{fig:pdf}. In Fig. \ref{fig:UserVsAntenna}, we show how $M_c$ varies with $K_c$ when the  network loads are low (13\%), medium (50\%)  and high (100\%)  which corresponds to the  user distribution as shown in Fig. \ref{fig:pdf}.
\begin{figure}[!tbp]
  \centering
  \begin{minipage}[b]{0.49\textwidth}
    \includegraphics[width=\textwidth]{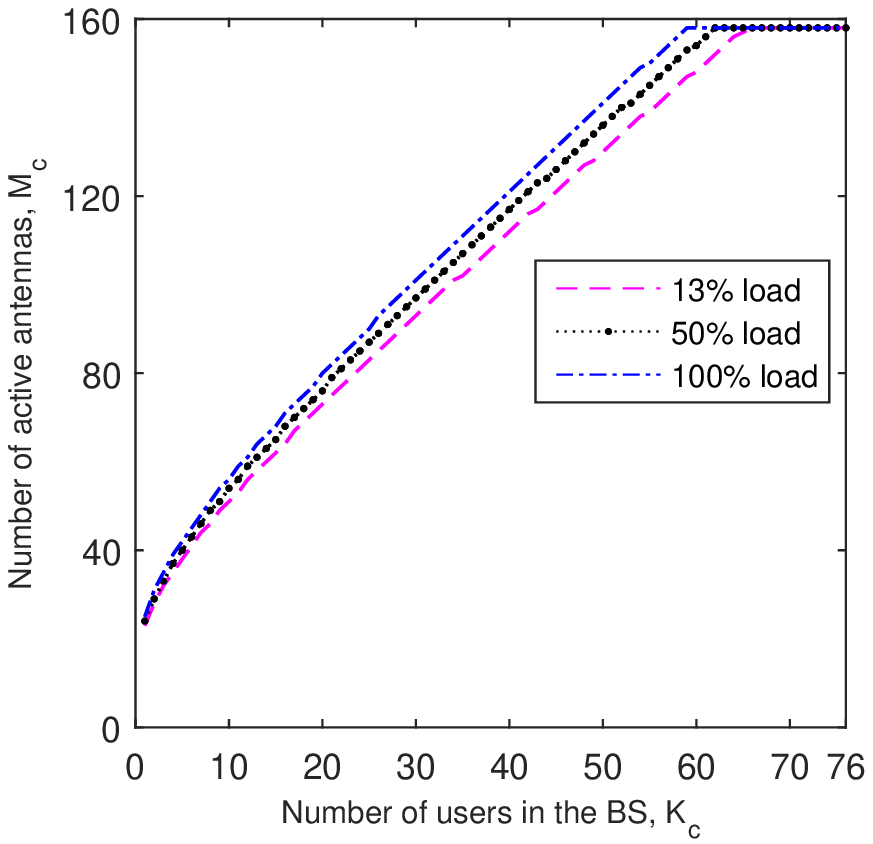}
    \caption{Number of antennas as a function of the number of users in the cell  at 13\%, 50\%  and 100\%  load.}
		\label{fig:UserVsAntenna}
  \end{minipage}
  \hfill
  \begin{minipage}[b]{0.49\textwidth}
    \includegraphics[width=\textwidth]{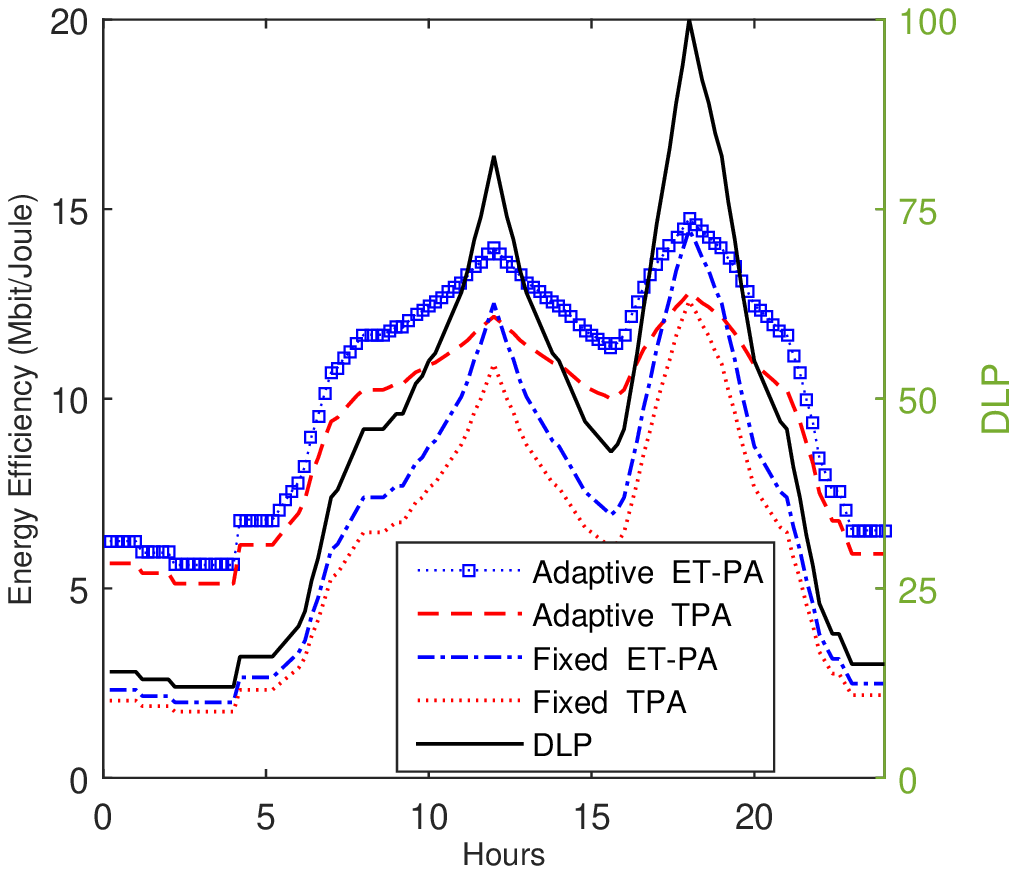}
    \caption{Average EE over 24 hour load variation for both fixed and adaptive antenna while using both TPA and ET-PA.}
		\label{fig:EEtPAeTPA}
  \end{minipage}
\end{figure}
 When the EE is optimized over the varying number of users, $M_c$ adapts to the user profile resulting in an approximately linear relation between $M_c$ and $K_c$. However, the ratio between $M_c$ and $K_c$ is quite high when the BS serves only a few users and ends up around two at high load. When $K_c$ is very small, the energy consumed by activation of an additional antenna is not significant compared to the fixed energy consumption but contributes significantly to increase the EE due to  higher array gain, i.e., $(M_c-K_c)$. Another observation from  Fig. \ref{fig:pdf} and Fig. \ref{fig:UserVsAntenna} is that although the  $M_c$ for a particular $K_c$ does not change a lot at different network load,  the average number of antennas used by a BS at different loads varies significantly due to the probability distribution of the users. Note that  the rest of the results are generated by taking weighted average over the performance of all the user states for each particular average network load.

\subsection{EE gain at different network load}
In  Fig. \ref{fig:EEtPAeTPA}, we show the hourly average EE throughout the day following the DLP for both the TPA and ET-PA.  In general, the EE increases with the increase in load for both  the reference case and our scheme for both TPA and ET-PA.  Our scheme achieves significantly higher EE compared to the reference case at low load. However, this EE gain keeps decreasing with the increase in load.  At the peak load, the gain is insignificant as the probability of having small number of users which allows EE improvement by reducing antennas is very low,  see Fig. \ref{fig:pdf}.

\subsection {Energy  saving potential}
Fig. \ref{fig:Pconsumption} shows the  actual power consumption at different network load for both the adaptive antenna scheme and the fixed antennas, i.e., the reference case. It is observed that power consumption is drastically reduced at low load as the number of active 
\begin{figure}[!tbp]
  \centering
  \begin{minipage}[b]{0.49\textwidth}
    \includegraphics[width=\textwidth]{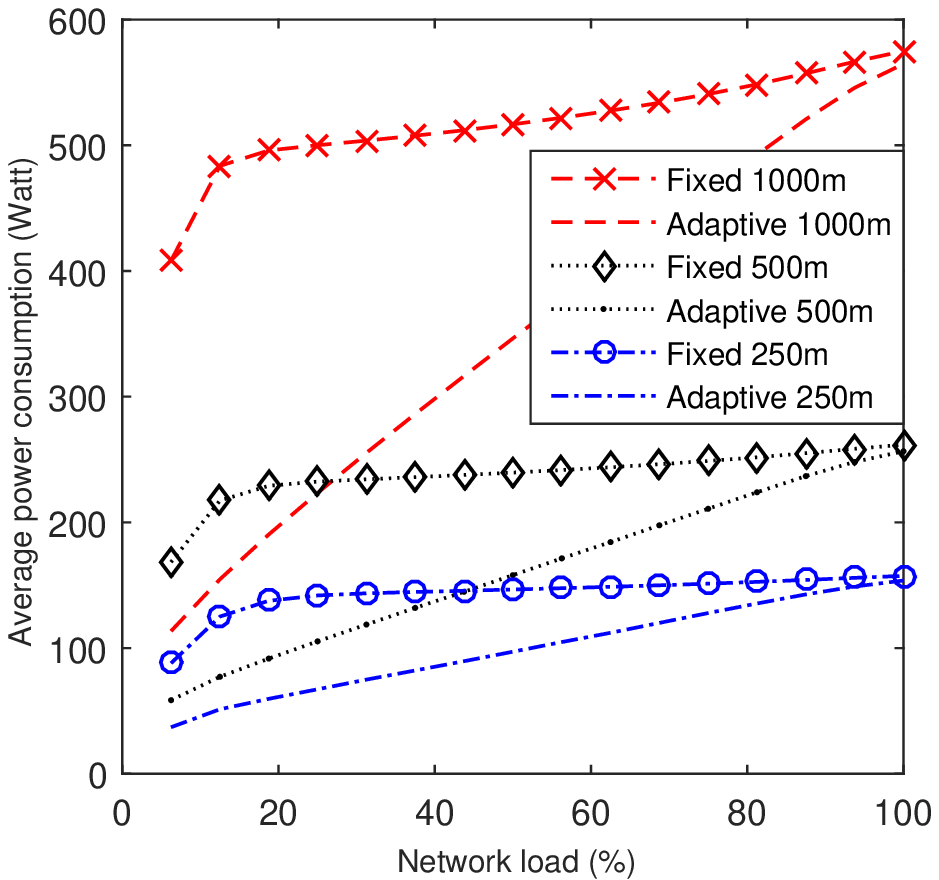}
    \caption{Power consumption vs. network load at different cell sizes for both fixed and adaptive antenna schemes.}
			\label{fig:Pconsumption}
  \end{minipage}
 \hfill
  \begin{minipage}[b]{0.49\textwidth}
	    \includegraphics[width=\textwidth]{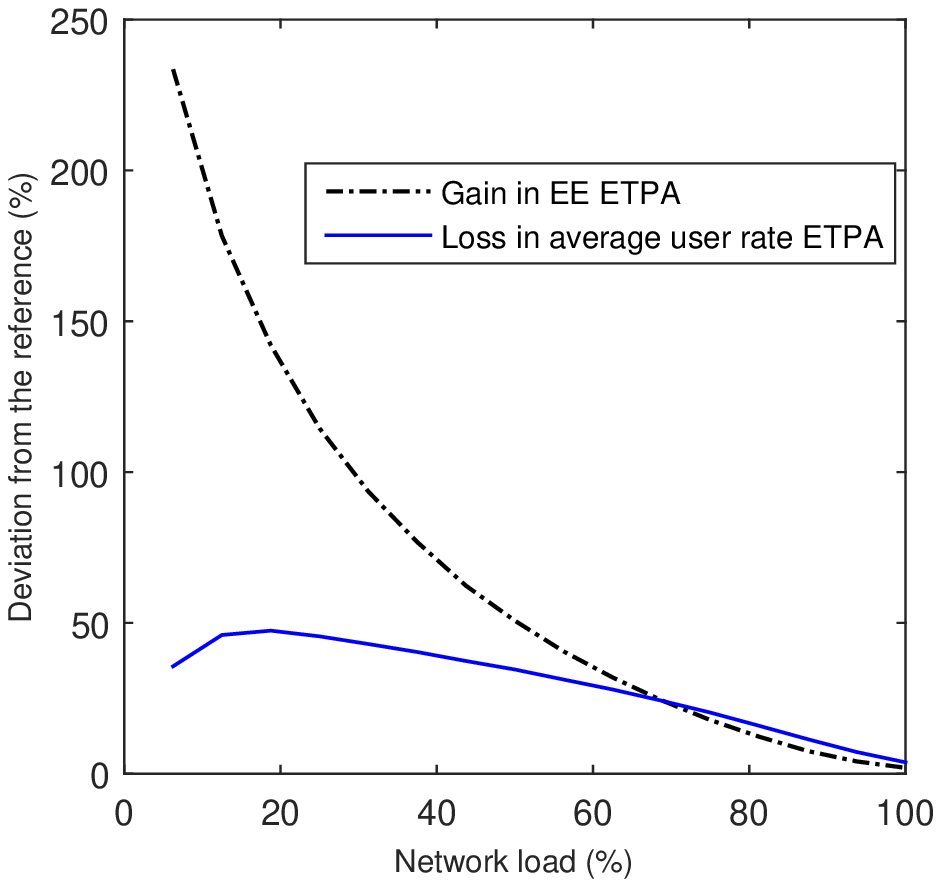}
			    \caption{Tradeoff between EE and user rate at different network loads.}
					\label{fig:TradeOffWS}
  \end{minipage}
\end{figure}
 antennas are allowed to be low following the load under the adaptive scheme. However, the difference becomes less significant with the increase of network load.  Over 24 hour of operation, it is observed that  the  energy saving potential  by adopting adaptive antenna scheme over reference case is \{$40$, $39$, $38$\}\%  for ET-PA and \{$40$, $40$, $37$\}\% for TPA at  \{$1000$, $500$, $250$\} m cell radius. Note that the idle state, i.e., the fraction of time when the BS serves no user is different under these two schemes when serving the same load. Therefore, it has been taken into  account while finding total energy saving potential for fair comparison.

\subsection {EE and user rate tradeoff}
Fig.~\ref{fig:TradeOffWS} shows the tradeoff between the EE and the average user rate at different load for TPA. At very low load the EE has been increased with around 250\% at the cost of around 50\% reduction of the average user data rate. However, with the increase of load in the system, both the gain in EE and loss of user rate get reduced. Over the 24 hour operation, EE has been found to be improved around 24\% at the cost of around 12\% reduction in user rate. Note that the gain in EE and loss in user rate was quite similar for both TPA and ET-PA.

\begin{figure}[!tbp]
  \centering
  \begin{minipage}[b]{0.49\textwidth}
    \includegraphics[width=\textwidth]{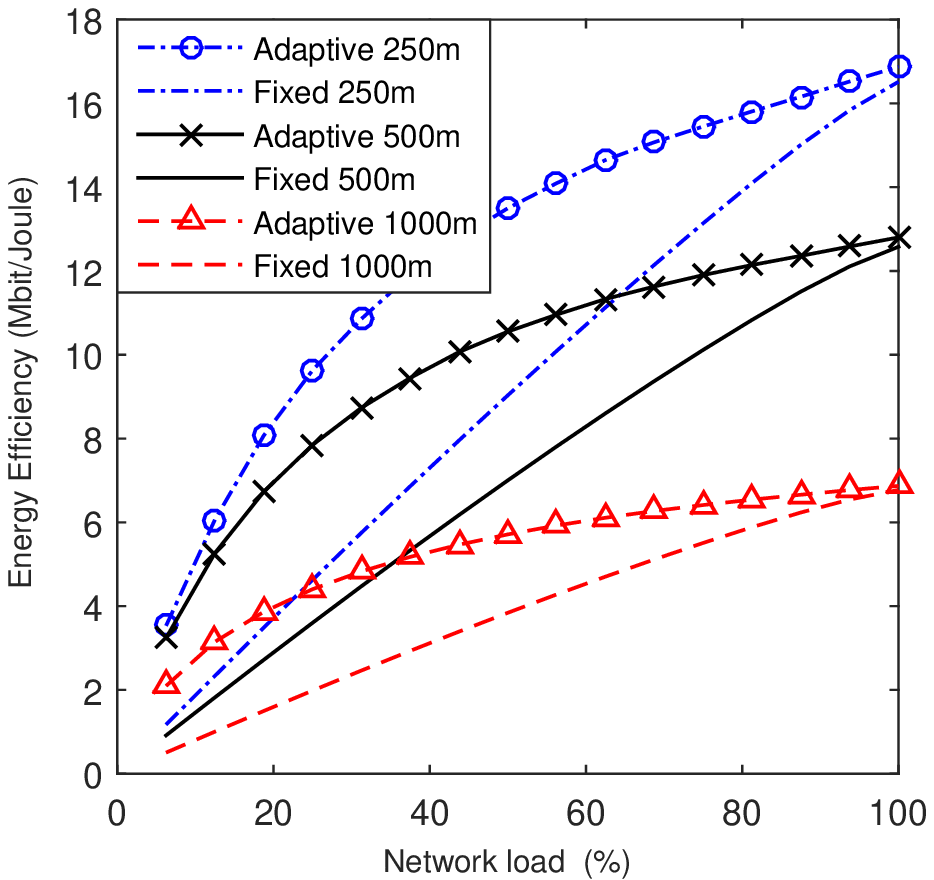}
    \caption{Impact of cell size on EE vs. network Load.}
		\label{fig:dCompareEE}
  \end{minipage}
  \hfill
  \begin{minipage}[b]{0.49\textwidth}
    \includegraphics[width=\textwidth]{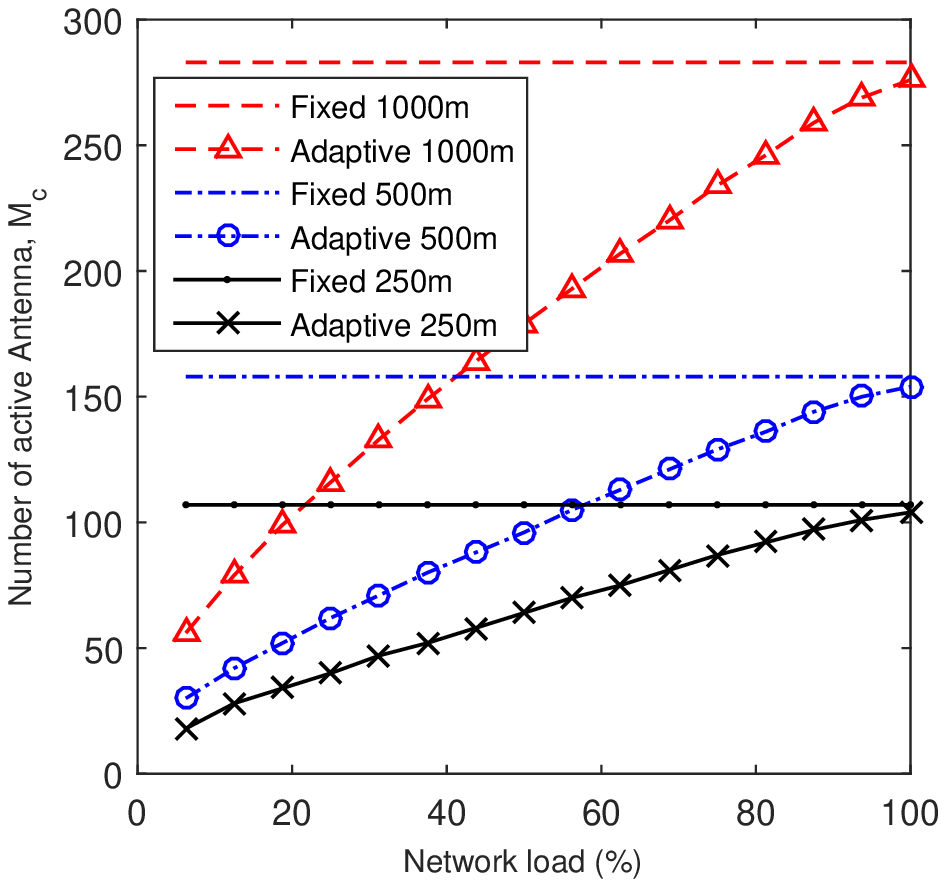}
   \caption{Impact of cell size on number of antennas vs. network Load.}
\label{fig:dCompareAntenna}
  \end{minipage}
\end{figure}
\subsection{Impact of cell size}
In Fig.~\ref{fig:dCompareEE} and  Fig.~\ref{fig:dCompareAntenna}, we show the impact of the cell size on the EE and the corresponding antenna activation policy of the BS. We found that for a cell radius of \{$1000$, $ 500$, $250$\} m, the optimum average transmit power per antenna is  \{$0.21$, $0.101$, $0.04$\} Watt and \{$0.401$, $ 0.183$,  $ 0.065$\} Watt and the corresponding  $M_{\max}$ are  \{$283$, $158$, $107$\} and \{$230$, $134$, $97$\} and $K_{\max}$  are \{$107$, $76$, $56$\} and \{$97$, $68$, $52$\}  for TPA and ET-PA respectively. The corresponding improvement of EE for TPA is \{$22.9$, $22.51$, $27$\}\% and for ET-PA is \{$23.1$, $22.08$, $25.8$\}\%. 
\begin{figure}[!hbp] 
\centering
        \includegraphics[ scale=1]{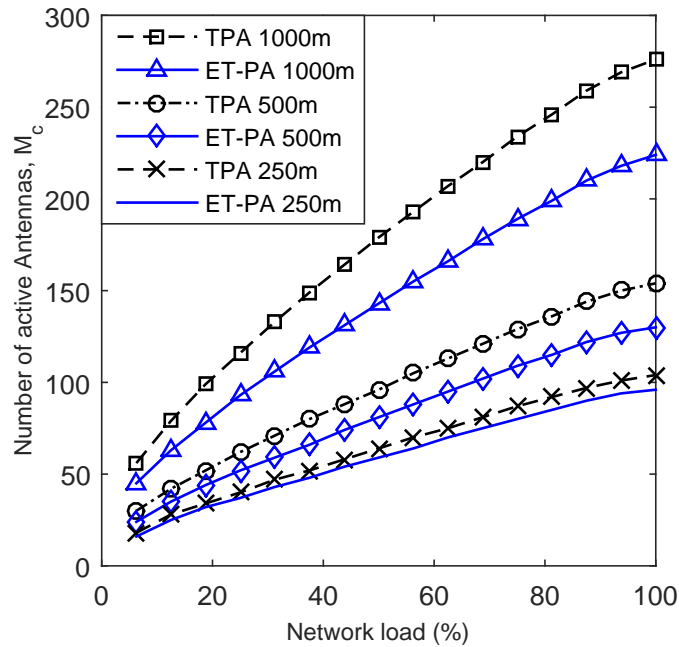}	
\caption{Comparison between TPA and ET-PA in terms of average number of active antennas vs. network load for different cell size.}
\label{fig:TPAvsETPAAntenna}
\end{figure}
 The  EE increases with the decrease in cell size and  $M_{\max}$, $K_{\max}$ and optimum $p$ decrease with the decrease in cell size. 

\begin{figure}[!bp]
  \centering
  \begin{minipage}[b]{0.49\textwidth}
    \includegraphics[width=\textwidth]{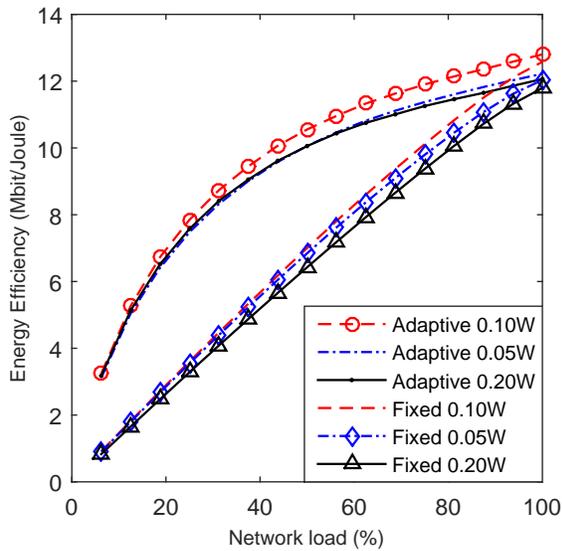}
    \caption{Impact of PA dimensioning on EE vs. network Load.}
		\label{fig:pCompareEET}
  \end{minipage}
  \begin{minipage}[b]{0.49\textwidth}
    \includegraphics[width=\textwidth]{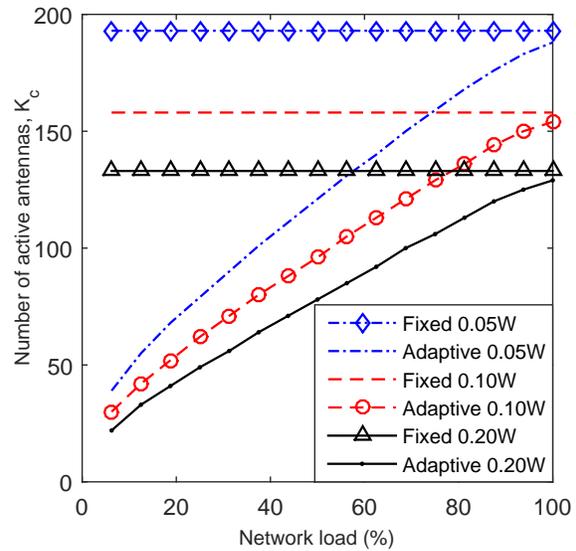}
    \caption{Impact of PA dimensioning on number of active antennas vs. network Load.}
		\label{fig:pCompareEAT}
  \end{minipage}
\end{figure}

\subsection {Impact of  PA efficiency}
It is evident from Fig. \ref{fig:EEtPAeTPA} and Fig. \ref{fig:TPAvsETPAAntenna} that the efficiency of the PA has significant impact on the energy efficient design of the network.   In case of TPA, the efficiency falls drastically when the operating point moves away from  $P_{\max,\text{PA}}$, whereas ET-PA achieves higher efficiency over a wider operating region. Therefore, each BS uses more antennas with TPA compared to ET-PA with less power per antenna as a means to push more number of antennas to operate near the compression region to reap the benefit of higher PA efficiency. Also note that the TPA requires comparatively higher number of  users to be served simultaneously in order to achieve maximum EE. Nevertheless, ET-PA yields higher EE for both the reference case and our scheme compared to TPA for any given cell radius.

\subsection{Impact of PA dimensioning}
 Fig.~\ref{fig:pCompareEET} and Fig.~\ref{fig:pCompareEAT} show the impact of PA dimensioning  on the EE and the corresponding change in required number of antennas for TPA.
The optimum average transmit power of the ET-PA is $0.1$ Watt when the cell radius is $500$m.  For the transmit power \{$0.05$, $0.10$, $0.20$\} Watt, the corresponding overall daily savings are \{$21$, $23$, $25$\}\%. Using PAs with power other than the optimum one reduces the overall EE for both fixed and adaptive antenna cases. However, the energy saving percentages by our scheme increases if the PA uses higher power than the optimum and vice versa. Similar trend  was observed for ET-PA.

 \section{Conclusion}
\label{sec:Conclusion}
In this work we investigate how to dynamically adapt  massive MIMO systems  to the user loads for higher EE. The temporal variation of load has been captured by modeling the load at each BS equipped with massive MIMO functionality  with an $M/G/m/m$ queue and mapping the user distribution to a DLP. We developed a game theory based distributive algorithm that yields significant gains in EE at the cost of reducing the average user data rate at low user load. However, the  rate degradation while increasing the EE  comes from the fact that we consider a very tight reference case. In our reference case, the system considers the complete shutdown of all the antennas when the BS is not serving any user. This reduces the interference significantly resulting in high data rates for some users which in turn allows the BS to reduce its activity time further.  Over the 24 hour operation, EE has been found to be improved around 24\% along with around 40\% energy saving potential while the corresponding reduction in user rates is found to be around 12\%. For transparency, the algorithm was developed for a simple rate formula based on perfect CSI, but the same methodology can be applied to other rate formulas as well.

\section*{Acknowledgment}
This work was  supported by EIT ICT Labs funded  5GrEEn, EXAM;   Academy of Finland funded  FUN5G, and also by ELLIIT and the Swedish Foundation for Strategic Research project.

\begin{appendices}
\section{}
\label{SupermodularProof}
In order to prove that the \emph{EE maximization game} is supermodular, it is sufficient to show that the three requirements stated in Definition $3$ are fulfilled. It is easy to see that the first two conditions hold and it remains to prove that the utility function ${E}_{c}$ has increasing differences. Since the optimal number of active antennas at different user states for a cell are independent of each other, it suffices to show that the function in \eqref{eq:Ebfinal} is supermodular.

\begin{lemma}
 Power term related to user rate, i.e.,  $\mathcal{A} K_c R_c$ in \eqref{eq:TPbb}, in the response dynamics can be ignored when maximizing \eqref{eq:Ebfinal}. 
\end{lemma}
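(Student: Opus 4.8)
The plan is to exhibit $E_c$ in \eqref{eq:Ebfinal} as a strictly increasing function of the reduced efficiency obtained by deleting the $\mathcal{A}K_c R_c$ contribution from its denominator, so that the two share the same maximizing number of antennas. First I would note that the numerator of \eqref{eq:Ebfinal}, namely $n\beta\log(1-nM_c\gamma_{c,1}+\gamma_{c,1}M_c^2)$, is exactly $K_c R_c$ with $K_c=n$, by the definitions of $\beta$ and $\gamma_{c,1}$ together with the rate formula \eqref{eq:Rc}. Then, using \eqref{eq:TPbb} and \eqref{eq:ETPAeff}, I would split the constant in the denominator as $C_0=\mathcal{A}K_c R_c+\tilde C_0$, where $\tilde C_0 = P_{\mathrm{Oth}}+\sum_{i=0}^{3}C_{0,i}K_c^i$ collects precisely the terms that, for a fixed user state $n$ and a fixed interference level (i.e.\ fixed $\mathbf{M}_{-c}$), do not depend on $M_c$. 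This puts \eqref{eq:Ebfinal} in the form
\[
E_c=\frac{K_c R_c}{\mathcal{A}K_c R_c+\tilde C_0+C_1 M_c}.
\]

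Next I would introduce the reduced objective $\hat E_c = \frac{K_c R_c}{\tilde C_0+C_1 M_c}$, which is \eqref{eq:Ebfinal} with the $\mathcal{A}K_c R_c$ term removed. Dividing numerator and denominator of $E_c$ by $\tilde C_0+C_1 M_c$ and writing $r=\hat E_c$ yields the identity $E_c=\frac{r}{\mathcal{A}r+1}$. Since $\mathcal{A}=P_{\mathrm{COD}}+P_{\mathrm{DEC}}>0$ and $r=\hat E_c\ge 0$ throughout the feasible range $n+1\le M_c\le M_{\max}$, the scalar map $r\mapsto r/(\mathcal{A}r+1)$ is strictly increasing on $[0,\infty)$, its derivative being $1/(\mathcal{A}r+1)^2>0$. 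Because composition with a strictly increasing function leaves the set of maximizers unchanged, $\arg\max_{M_c}E_c=\arg\max_{M_c}\hat E_c$, so the best response \eqref{eq:maximize} is not affected by ignoring $\mathcal{A}K_c R_c$.

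The point that needs care — and the reason the claim is not merely ``a constant drops out'' — is that $\mathcal{A}K_c R_c$ genuinely varies with $M_c$ through $R_c$, so it cannot be absorbed into $\tilde C_0$; the reduction must go through the monotone reparametrization $r\mapsto r/(\mathcal{A}r+1)$ rather than through any additive simplification. I would also remark that this is exactly what makes the increasing-differences computation in the remainder of this appendix tractable, since $\hat E_c$ no longer carries an $M_c$-dependent rate inside its denominator; replacing $E_c$ by $\hat E_c$ changes neither the best responses nor the Nash equilibria of the game.
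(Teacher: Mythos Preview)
Your argument is correct and essentially the same as the paper's: both write the full objective as $K_cR_c/(\mathcal{A}K_cR_c+\tilde C_0+C_1M_c)$ and then observe that it is a strictly increasing transform of the reduced objective $\hat E_c=K_cR_c/(\tilde C_0+C_1M_c)$. The paper does this by taking the reciprocal, obtaining a constant plus $1/\hat E_c$, while you do it by the equivalent fractional--linear identity $E_c=r/(\mathcal{A}r+1)$ with $r=\hat E_c$; either way the maximizers coincide, and your explicit remark that $\mathcal{A}K_cR_c$ is \emph{not} a constant in $M_c$ (so the reduction really does require the monotone reparametrization) is a useful clarification.
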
 
\begin{proof}
From \eqref{eq:P11}, \eqref{eq:TPbb} and \eqref{eq:Ebfinal}, for a given user $K_c$ our  objective is to maximize the function 
\begin{equation}  
\label{eq:lemma1} 
 \frac{K_c R_c}{\mathcal{A}K_c R_c + C_0^{'} + C_1 M_c}
\end{equation}
Note that maximizing \eqref{eq:lemma1} is equivalent to minimize $ {1+ \frac{ C_0{'} + C_1 M_c}{K_c R_c}}$, which is equivalent to minimize $ \frac{ C_0^{'} + C_1 M_c}{K_c R_c}$.
Therefore, while maximizing \eqref{eq:lemma1}, the term $\mathcal{A}K_c R_c$ can be ignored.
\end{proof}

 Let us denote the utility function of user state $n$ for any cell $c$ by the function 
\begin{equation}
f(x,y) = \frac{R(x,I(y))}{P(x)} 
\end{equation}

where $x=M_c(n)$ is the number of antennas that BS $c$ uses to serve $n$ users,  $I(y)>0$ is the received interference power from any  of the  interfering BSs  operating  with $y$ active antennas plus the noise power and $P(x)$ gives the power consumption by BS $c$ when serving $n$ users with $x$ antennas. If $f(x,y)$ has increasing differences then $\log{f(x,y)}$ will also have increasing differences. Let us define 
\begin{eqnarray}
F(x,y) = \log{f(x,y)}= \Omega(x,y)-\Phi(x)  
\end{eqnarray}
where $\Omega{(x,y)}=\log{R(x, I(y))}$ and $\Phi(x) = \log{P(x)}$.

According to  Definition $2$, it suffices to show that  $F_{xy} \geq 0 $ where $F_{xy}$ is the first cross derivative of $F(x,y)$ with respect to $x$ and $y$. We note that $ F_{xy}=  \Omega_{xy}$. Hence $F(x,y)$ is supermodular  if $\Omega(x,y)$ is supermodular. Note that supermodularity of $F(x,y)$ is independent of the PA characteristics $\Phi(x)$. 

 Considering the array gain when the BS  serves $n$ users simultaneously with $x= M_c$ antennas, $\Omega(x,y)$ can be written as $\log\log\Gamma(x,y)$  where $\Gamma(x,y)=1+\frac{x (x-n)}{I(y)}$ according to  \eqref{eq:Rc} and \eqref{eq:Ebfinal}. As $I(y) > 0$, with the help of the first derivatives,  $\Gamma_{x} $, $\Gamma_{y}$ and the first cross derivative, $\Gamma_{xy} $, we can express $\Omega_{xy}$ as 
\begin{eqnarray}
\label{crderivative1}
\Omega_{xy} &=& - \frac{1}{\log\Gamma(x,y)} \frac{1}{\Gamma(x,y)} \cdot \nonumber \\ 
 && \!\left(\frac{1}{\Gamma(x,y)}\!\left(\frac{1}{\log\Gamma(x,y)}\!+\! 1\!\right)\!\Gamma_{x}\Gamma_{y}\!-\!\Gamma_{xy}\!\right).
\end{eqnarray}
Based on \eqref{crderivative1}, $\Omega_{xy}$ will be nonnegative if
\[ 
 \frac{1}{\Gamma(x,y)} \left(\frac{1} {\log\Gamma(x,y)} + 1
 \right)\Gamma_{x} \Gamma_{y} - \Gamma_{xy} \leq 0.
\]
After computing the cross derivatives, the above inequality can be written as 
\begin{equation}
\label{inequal1}
\frac{1}{\Gamma(x,y)}\left(\frac{1} {\log\Gamma(x,y)}+1\right)\frac{x(x-n)}{I(y)} \geq 1.
\end{equation} 
After replacing $\Gamma(x,y)= 1+\frac{x(x-n)}{I(y)}$ in  inequality~\eqref{inequal1}, we get 
\begin{equation} 
\label{inequal2}
\left(1+ \frac{x(x-n)}{I(y)}\right) \leq \left(\frac{1}{\log\left(1+ \frac{x(x-n)}{I(y)}\right)}+1 \right)\frac{x(x-n)}{I(y)}.
\end{equation}
After canceling out common terms inequality~\eqref{inequal2} becomes
$\frac{x(x-n)}{I(y)} \geq \log\left(1+\frac{x(x-n)}{I(y)}\right)$, which holds.
We can deduce that $\Omega(x,y) $  is supermodular since $z \geq \log(1+z)$ for $z= x(x-n)/I(y)$. Hence EE maximization problem is supermodular.

\section{}
\label{RateFormulaProof}
In order to derive the rate formula \eqref{eq:Rc}, let us denote the desired signal received by the user $k$ of cell $c$ by
\begin{equation}
\label{RecievedSignal}
y_{ck} =\mathbf{h}_{cck}^H \sum_{i=1}^{K_c}{\mathbf{w}_{ci} s_{ci}}+  \sum_{d \neq c} \mathbf{h}_{dck}^H \sum_{i=1}^{K_d}{\mathbf{w}_{di}s_{di}.}
\end{equation}
Assuming perfect CSI at the BS and users and using ZF precoding, (i.e., $\mathbf{h}_{cck}^H \mathbf{w}_{ci}=0$ for $k \neq i)$ and considering equal power allocation, i.e., $s_{ci} \sim \mathcal{CN}(0,  \frac{ p M_c}{K_c})$, the ergodic rate of user $k$ in cell $c$

\begin{eqnarray}
\label{eq:RateFormula1}
R_{ck} &=& \beta \mathbb{E}\left[ \log_2 \left( 1+ \frac{p \frac{M_c}{K_c} |\mathbf{h}_{cck}^H \mathbf{w}_{ck}|^2}{\sigma^2+\sum_{d \neq c}p\frac{M_d}{K_d} \sum_{i=1}^{k_d} |\mathbf{h}_{dck}^H \mathbf{w}_{dk} |^2} \right)\right]  \\
&\geq & \beta \log_2 \left( 1+ \frac{p \frac{M_c}{K_c} (\mathbb{E} [|\mathbf{h}_{cck}^H \mathbf{w}_{ck}|^{-2}] )^{-1}}{\sigma^2+\sum_{d \neq c}p \frac{M_d}{K_d} \sum_{i=1}^{k_d} \mathbb{E} \left[|\mathbf{h}_{dck}^H \mathbf{w}_{dk} |^2\right]} \right) \label{eq:Jensen}
\end{eqnarray} 
where Jensen's inequality is used to derive \eqref{eq:Jensen} from \eqref{eq:RateFormula1} and $\beta= B\left(1-\frac{\alpha K_{\max}}{T_c}\right)$. \\

Assuming $\mathbf{h}_{dci} \sim \mathcal{CN}(0, g_{dci} \mathbf{I}_{M_d})$, $\mathbb{E} \left[ |\mathbf{h}_{dck} \mathbf{w}_{dk}|^2\right] =g_{dck} \cdot 1 = g_{dck}$ as $\mathbf{h}_{dck}$ and $\mathbf{w}_{dk}$ are independent for $d \neq c$ and   $\mathbb{E}[||\mathbf{w}_{dk}||^2]=1$. Also 
$\mathbb{E}[|\mathbf{h}_{cck} \mathbf{w}_{ck}|^{-2}] = \frac{1}{(M_c-K_c) g_{cck}}$ leads to
\begin{equation}
\label{RateFormula2}
R_{ck}= \beta \log_2 \left( 1+ \frac{p \frac{M_c}{K_c} (M_c-K_c) g_{cck}}{\sigma^2+\sum_{d \neq c}p\frac{M_d}{K_d} \sum_{i=1}^{k_d} g_{dck}} \right).
\end{equation}
 \\ Finally, we compute a lower bound on the average ergodic rate in a cell by using Jensen's inequality as
\begin{eqnarray}
\label{RateFormula3}
\mathbb{E}[R_{ck}] & \geq & \beta \log_2 \left( 1+ \frac{p \frac{M_c}{K_c} (M_c-K_c)}{\sigma^2 \mathbb{E}{[\frac{1}{g_{cck}}]}+\sum_{d \neq c}p\frac{M_d}{K_d} K_d\mathbb{E} [\frac{g_{dck}}{g_{cck}}] }\right)\\
&=& B\left(1-\frac{\alpha K_{\max}}{T_c}\right) \log_2 \left(1+ \frac{p\frac{M_c}{K_c}(M_c-K_c)}{\sigma^2 G_{cc}+\sum_{d \neq c}p M_d G_{cd}} \right)
\end{eqnarray}
where we define $G_{cc}$= $\mathbb{E}$ ${[\frac{1}{g_{cck}}]}$  and $G_{cd}$=$\mathbb{E}$${[\frac{g_{dck}}{g_{cck}}]}$.


\end{appendices}


\begin{thebibliography}{1}

\bibitem{Ericsson}  Dahlman  E. \emph{et al.},  ``5G radio access," \emph{Ericsson Rev.}, vol. 6, pp. 2-7, Jun. 2014.


\bibitem{EarthModel} Auer G. \emph{et al.}, ``D2.3: Energy efficiency analysis of the reference systems, areas of improvements and target breakdown," Energy Aware Radio and Network Technologies (EARTH) INFSO-ICT-247733, ver. 2.0, 2012. [Online]. Available: http://www.ict-earth.eu/


\bibitem{Larsson} Larsson E. G.,  Tufvesson F., Edfors O., and Marzetta T. L., ``Massive MIMO for Next Generation Wireless Systems," \emph{IEEE Commun. Mag.}, vol. 52, no. 2, pp. 186-195, Feb. 2014.

\bibitem{Da} Feng D. \emph{et al}., ``A survey of energy-efficient wireless communications,"  \emph{IEEE Commun. Surveys \& Tutorials}, vol. 15, no. 1, pp. 167-178, 2013.

\bibitem{Hien} Hien Q. N.; Larsson, E.G.; Marzetta, T.L.,  ``Energy and Spectral Efficiency of Very Large Multiuser MIMO Systems," \emph{ IEEE Trans. Commun.}, vol. 61, no. 4, pp. 1436-1449, April 2013.
	
\bibitem{Yang}	 Yang H. and  Marzetta T. L., ``Total energy efficiency of cellular large scale antenna system multiple access mobile networks," in \emph{Proc. IEEE Online GreenComm}, 2013.
		
\bibitem{Emilc} Bj\"ornson E., Sanguinetti L., Hoydis J., Debbah M., ``Designing Multi-User MIMO for Energy Efficiency: When is Massive MIMO the Answer?,'' in Proc. of \emph{IEEE  WCNC}, Istanbul, Turkey, April 2014.
			
\bibitem{Emilj}  Bj\"ornson E.,  Sanguinetti L.,  Hoydis J.,  Debbah M.,``Optimal Design of Energy-Efficient Multi-User MIMO Systems: Is Massive MIMO the Answer?," \emph{ IEEE Trans.  Wireless Commun.}, vol. 14, no. 6, pp. 3059-3075, June 2015.
		
\bibitem{Ngoc} Le N. P.; Tran L. C.; Safaei F., ``Adaptive antenna selection for energy-efficient MIMO-OFDM wireless systems," in \emph{International Symposium on WPMC},  pp. 60-64,  Sept. 2014.
	

\bibitem{Magnus}  Frenger, P.; Olsson, M.; Eriksson, E.,  ``Radio network energy performance of massive MIMO beamforming systems," in \emph{proc.  IEEE PIMRC}, pp. 1289-1293, 2014  


	
\bibitem{Persson} Persson, D.; Eriksson, T.; Larsson, E.G., ``Amplifier-Aware Multiple-Input Single-Output Capacity," \emph{ IEEE Trans. Commun.},  vol.62, no.3, pp.913-919, March 2014

\bibitem{Cheng} Cheng H. V., Persson D. , Bjo\"rnson E.,  Larsson E. G.,  ``Massive MIMO at Night: On the Operation of Massive MIMO in Low Traffic Scenarios," in \emph{Proc. IEEE ICC}, London, UK, June 2015.

\bibitem{WS}  Hossain M. M. A.; Cavdar C.; Bj\"ornson E.,  J\"antti, R., " Energy-efficient load-adaptive massive MIMO" to be published in Workshop on Massive MIMO: From theory to practice, \emph{IEEE Globecom}, 2015. Available online:  \url{http://arxiv.org/abs/1510.00649}



\bibitem{WS14}  Hossain M. M. A.; Cavdar C.; J\"antti, R., "Dimensioning of PA for massive MIMO system with load adaptive number of antennas" in Workshop on Massive MIMO: From theory to practice, \emph{IEEE Globecom}, 2014.


\bibitem{Chen}  Chen Y.,  Zhang S.,  Xu S., and  Li G., ``Fundamental trade-offs on green wireless networks,'' \emph{IEEE Commun. Mag.}, vol. 49, no. 6, pp. 30-37, 2011.
		
\bibitem{Cruz} Cruz F. R.B, Smith J. M., `` Approximate analysis of M/G/c/c state Dependent Queeing networks,'' \emph{Computers and Operation research}, vol. 34, Issue 8, pp. 2332-2344, 2007.

\bibitem{Cheah} Cheah J. Y., Smith J. M., ``Generalized M/G/C/C state dependent queueing models and pedestrian traffic flows," \emph{Queueing Syst.},  Vol. 15, Issue 1-4, pp. 365-386, 1994. 
	
		\bibitem{CLRL} Ribeiro L. Z.; PhD dissertation, ``Traffic dimensioning for multimedia wireless network,'' Virginia Polytechnic Institute and State University, 2003
		
\bibitem{wegener}  Wegener, A.;  "High-performance crest factor reduction processor for W-CDMA and OFDM applications," in \emph{ IEEE  Symp. Radio Frequency Integrated Circuits (RFIC)}   pp., 11-13 June 2006

\bibitem{Hossain1}  Hossain M. M. A. and J\"antti R., ``Impact of efficient power amplifiers in wireless access,'' in \emph{IEEE online GreenComm} pp. 36-40, Sept. 2011.

\bibitem{Hossain} Hossain, M.M.A; Koufos, K.; J\"antti, R., "Minimum-Energy Power and Rate Control for Fair Scheduling in the Cellular Downlink under Flow Level Delay Constraint," \emph{IEEE Trans. Wireless Commun.}, vol.12, no.7, pp.3253-3263, July 2013
	
	\bibitem{PASTA}  Adan I. and  Resing J., \emph{ Queueing Systems}, 2015. Available online: \url{http://www.win.tue.nl/~iadan/queueing.pdf}

	
Medhi, J., \emph{Stochastic models in queueing theory}, 2nd Edition, Academic Press, 2003.

	
\bibitem{Emilmo} Bj\"ornson E. \emph{et al}.,	``Multi-Objective Signal Processing Optimization: The Way to Balance Conflicting Metrics in 5G Systems," \emph{IEEE Signal Process. Mag.}, vol. 31, no. 6, pp. 14-23,  2014.


	\bibitem{Levin} J. Levin, Supermodular games. \emph{Lecture notes on Game Theory and Economic Applications}, Department of  Economics, Stanford University, 2006.

\bibitem{Altman}  Altman E.,  Altman Z., ``S-modular games and power control in  wireless networks," \emph{IEEE Trans. on Automatic Control}, vol. 48, no. 5, pp. 839-842, May 2003.
\bibitem{Topkis}  Topkis D. M., \emph{Supermodularity and Complementarity}.    Princeton, NJ, Princeton Univ. Press, 1998.

\bibitem{Ratio}  Gotoh J., Hiroshi Konna H., ``Maximization of the Ratio of Two Convex Quadratic Functions over a Polytope," \emph{Computational Optimization and Applications}, Volume 20, Issue 1, pp. 43-60, October 2001.

\bibitem{Ratio2}  Benson H. P., "Maximizing the ratio of two convex functions over a convex set", \emph{ Naval Research Logistics (NRL)} Volume 53, Issue 4, pp. 309-317, June 2006


	
\end{thebibliography}
\end{document}